\documentclass[runningheads]{llncs}
\pdfoutput=1
\usepackage{latexsym}
\usepackage{amsmath}
\usepackage{amssymb}
\usepackage{graphicx}
\usepackage{xspace}
\usepackage{todonotes}
\usepackage[font=small, labelfont=bf]{caption}
\usepackage[subrefformat=parens, labelfont=default]{subfig}
\usepackage{times}
\usepackage{enumerate}
%\usepackage[pdfpagelabels, naturalnames]{hyperref}

% New commands
\newcommand{\dc}{\mathrm{DC}}

\newcommand{\NPH}{\textnormal{\texttt{NP}}-hard\xspace}
\newcommand{\NPC}{\textnormal{\texttt{NP}}-complete\xspace}

\begin{document}

\title{On Maximum Differential Coloring of Planar Graphs}

%%%% Author list of contribution
\author{M.~A.~Bekos\inst{1}, M.~Kaufmann\inst{1}, S.~Kobourov\inst{2}, S.~Veeramoni\inst{2}}
%
%%%% Author list for running head
\authorrunning{M.~A.~Bekos, M.~Kaufmann, S.~Kobourov, S.~Veeramoni}
%
%%%%% List of authors for the TOC
\tocauthor{M.~A.~Bekos, M.~Kaufmann, S.~Kobourov, S.~Veeramoni}
%
%%%% Abbreviated author list for running head in case of space limitations
\authorrunning{M.~A.~Bekos, M.~Kaufmann, S.~Kobourov, S.~Veeramoni}

\institute{%
    Wilhelm-Schickard-Institut f\"{u}r Informatik - Universit\"{a}t T\"{u}bingen, Germany\\
    %\email{$\{$bekos,mk$\}$@informatik.uni-tuebingen.de}
    \and
    Department of Computer Science - University of Arizona, Tucson AZ, USA\\
    %\email{kobourov@cs.arizona.edu, sankar@email.arizona.edu}
}

\maketitle

\begin{abstract}
We study the \emph{maximum differential coloring problem}, where the
vertices of an $n$-vertex graph must be labeled with distinct
numbers ranging from $1$ to $n$, so that the minimum absolute
difference between two labels of any two adjacent vertices is
maximized. As the problem is \NPH for general
graphs~\cite{leung1984}, we consider planar graphs and subclasses
thereof. We initially prove that the maximum differential coloring
problem remains \NPH, even for planar graphs. Then, we present tight
bounds for regular caterpillars and spider graphs. Using these new
bounds, we prove that the Miller-Pritikin labeling
scheme~\cite{miller89} for forests is optimal for regular
caterpillars and for spider graphs. Finally, we describe
close-to-optimal differential coloring algorithms for general caterpillars
and biconnected triangle-free outer-planar graphs.
\end{abstract}

% --------------------------------------------------------------------
% --------------------------------------------------------------------

%-------------------------------------------------------------------------
\section{Introduction}
\label{sec:intro}
%-------------------------------------------------------------------------

The four color theorem states that only four colors are needed to
color any map, so that no neighboring countries share the same
color. However, if the countries in the map are not all contiguous,
then the result no longer
holds~\cite{Gansner_Hu_Kobourov_2009_gmap}. In order to avoid
ambiguity, this necessitates the use of a unique color for each
country. As a result, the number of colors needed is equal to the
number of countries.

Given a map, define the country graph $G = (V,E)$ to be the graph
where countries are vertices and two countries are connected by an
edge if they share a nontrivial border. In the {\em maximum
differential coloring problem}~\cite{leung1984} the goal is to find
a labeling of the $n$ vertices of graph $G$ with distinct numbers
ranging from $1$ to $n$ (treated as {\em colors}),  which maximizes
the absolute label difference among adjacent vertices. More
formally, let $C = \{c \mid c: V \to \{1, 2, \ldots , |V|\}\}$ be the set
of one-to-one functions for labeling the vertices of $G$. For any
$c\in C$, the {\em differential coloring} achieved by $c$ is
$\min_{(i,j) \in E} |c(i) - c(j)|$. We seek the labeling function $c
\in C$ that achieves the maximum differential coloring: $\dc(G) =
\max_{c\in C} \min_{(i,j) \in E} |c(i) - c(j) |$, which is the
\emph{differential chromatic number} of $G$.

The maximum differential coloring problem is %the complement
in a sense the opposite of the well-studied {\em bandwidth
minimization problem}, which is known to be
\NPC~\cite{Monien86,Papadimitriou_1975}. Optimal algorithms for the
bandwidth minimization problem are known only for restricted classes
of graphs, e.g., caterpillars with hair length
$1$~\cite{NYAS:NYAS344}, caterpillars with hair length
$3$~\cite{assmann1981bandwidth}, chain
graphs~\cite{kloks1998bandwidth}, co-graphs~\cite{yan1997bandwidth},
bipartite permutation graphs~\cite{Heggernes07bandwidthof}, AT-free
graphs~\cite{golovach2009bandwidth}. As in many graph theoretic
maximization vs minimization problems (e.g., shortest vs longest
path), results for bandwidth minimization do not translate into
results for maximum differential coloring. Although the maximum
differential coloring problem is less known than the bandwidth
minimization problem, it has received considerable attention
recently. In addition to
map-coloring~\cite{Gansner_Hu_Kobourov_2009_gmap}, the problem is
motivated by the \emph{radio frequency assignment problem}, where
$n$ transmitters have to be assigned $n$ frequencies, so that
interfering transmitters have frequencies as far apart as
possible~\cite{hale80}.

%-------------------------------------------------------------------------
\subsection{Previous Work}
\label{sec:previouswork}
%-------------------------------------------------------------------------

The maximum differential coloring problem was initially studied  in
the context of multiprocessor scheduling under the name ``separation
number'' by Leung {\em et al.}~\cite{leung1984}, who showed that the
problem is \NPC. Twenty years later, Yixun {\em et
al.}~\cite{yixun2003dual} studied the same problem under the name
``dual-bandwidth'' and gave several upper bounds, including the
following simple bound for connected graphs:

\begin{property}
For any connected graph $G$, $\dc(G) \le \lfloor \frac{n}{2}
\rfloor$~{\em\cite{yixun2003dual}}.
\label{prp:upperbd}
\end{property}

The proof is straightforward: one of the vertices of $G$ has to be
labeled $\lceil \frac{n}{2} \rceil$ and since $G$ is connected that
vertex must have at least one neighbor which (regardless of its label)
would make the difference along that edge at most $\lfloor \frac{n}{2}
\rfloor$.

The maximum differential coloring problem is also known as the
``anti-bandwidth problem''~\cite{abktree}. Heuristics for the
maximum differential coloring problem have been suggested by Duarte
{\em et al.}~\cite{grasp} using LP-formulation, by Bansal {\em et
al.}~\cite{memetic} using memetic algorithms and by Hu {\em et
al.}~\cite{hu2010} using spectral based methods. Another line of
research focuses on solving the maximum differential coloring
problem optimally for special classes of graphs, e.g., Hamming
graphs~\cite{abhamming}, meshes~\cite{3dmesh},
hypercubes~\cite{abrsstv,Wang20091947}, complete binary
trees~\cite{weili} and complete $k$-ary trees for odd values of
$k$~\cite{abktree}. Isaak {\em et al.}~\cite{Isaak98powersof} give a
greedy algorithm for the differential chromatic number of complement
of interval and threshold graphs by computing the  $k$-th power of a
Hamiltonian path. Weili {\em et al.}~\cite{weili} compute the
differential chromatic number for what they call ``caterpillars''
(but which differ from the standard graph-theoretic caterpillars).

Miller and Pritikin~\cite{miller89} describe a labeling scheme
which, for a forest $G$ with bipartition $U$ and $V$, gives a
differential coloring value equal to the size of the smaller vertex set, i.e.,
$min\{|U|,|V|\}$. In high-level description, this approach can be
summarized as follows. Say, without loss of generality, that $|U|
\le |V|$. The vertices in $U$ are labeled with labels from the
``minority interval'' $I_{\mathrm{min}} = [1,|U|]$, while the
vertices in $V$ are labeled with labels from the ``majority
interval'' $I_{\mathrm{maj}} = [|U|+1,|V|]$. Since the average
degree of the vertices in $V$ is $|U| + |V| - 1 / |V| < 2$, there
exists at least one vertex in $V$, say $v$, with degree $\le 1$.
Based on the vertex $v$, a vertex $u \in U$ is chosen as follows: If $deg(v) = 1$, then
$u$ is the neighbor of $v$. Otherwise, $u$ is arbitrarily chosen
from $U$. Both $v$ and $u$ are then labeled with the smallest
available labels from $I_{\mathrm{maj}}$ and $I_{\mathrm{min}}$,
respectively, and removed from $G$. This procedure is repeated until
$U$ is empty. The remaining vertices in $V$ (if any) are labeled
with the remaining available labels in $I_{\mathrm{maj}}$. Note that
when a vertex $u \in U$ is labeled, a vertex $v \in V$ is also
labeled. Hence, as long as $U$ is non-empty, the number of labeled
vertices of $U$ is equal to the number of labeled vertices of $V$.
This implies that the minimum label difference between any two
neighboring vertices in $G$ is at least $|U|$.

A closely related problem to the maximum differential coloring
problem is the \emph{equitable coloring problem}~\cite{hajnal70}.
Formally, an equitable coloring is an assignment of colors to the
vertices of a graph, so that no two adjacent vertices have the same
color and the number of vertices in any two color classes differ by
at most one. The problem of deciding whether a graph admits an
equitable coloring with no more than three colors is
\NPC~\cite{kubale2004graph}. If a graph $G$ has differential
chromatic number $k$, then the vertices labeled $[1,k] , [k+1,2k]
\cdots$ form equitably colored classes and so $G$ has an equitable
coloring with $\lfloor\frac{n}{k}\rfloor + 1$ colors. Lin {\em et
al.}~\cite{linantibandwidth} describe a (sub-optimal) labeling for
connected bipartite graphs with a differential coloring of value
$\lfloor\frac{n}{\Delta}\rfloor$, where $\Delta$ is the max degree,
using the relationship between the anti-bandwidth problem and the
equitable coloring problem.

Another related problem is the {\em channel assignment
problem}~\cite{reed2003recent}, in which each edge has a weight and
the objective is to find a labeling of the vertices, so that the
difference between the labels of the endpoints of each edge is at
least equal to its weight. However, the same label can be used by
multiple vertices.

%-------------------------------------------------------------------------
\subsection{Preliminaries}
\label{sec:preliminaries}
%-------------------------------------------------------------------------

Let $G=(V,E)$ be an undirected graph. We denote the number of
vertices of $G$ by $n$, i.e., $n=|V|$. The degree of vertex $u\in V$
is denoted by $d(u)$. The degree of graph $G$ is then defined as:
$\Delta(G)=\max_{u\in V}d(u)$.

A {\em caterpillar} is a tree in which removing all leaves results
in a path; see Fig.~\ref{fig:caterpillar}. Thus, a caterpillar
consist of a simple path, called the ``spine'', and each spine
vertex is adjacent to a certain number of leaves, called the
``legs''. In caterpillars, $\Delta$ refers to the maximum number of
legs of any spine vertex. In a {\em regular caterpillar}, every
spine vertex has the same number of legs. A {\em spider} is a graph
with a center vertex connected to a particular number of disjoint
paths; see Fig.~\ref{fig:spider}. The vertices of a spider have {\em
levels}, according to their distance from the center. In a spider,
$N_e$, $N_o$ and $N_l$ denote the number of even level, number of
odd level and number of vertices in level $l$, respectively. A
\emph{radius-k star} graph is a spider with all paths of the same
length k; see Fig.~\ref{fig:star}.

\begin{figure}[htb]
    \centering
    \begin{minipage}[b]{.48\textwidth}
        \centering
        \subfloat[\label{fig:caterpillar}{A caterpillar}]
        {\includegraphics[width=\textwidth,page=1]{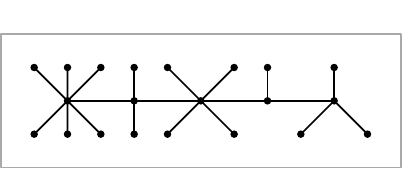}}
    \end{minipage}
    \begin{minipage}[b]{.24\textwidth}
        \centering
        \subfloat[\label{fig:spider}{A spider}]
        {\includegraphics[width=\textwidth,page=2]{caterpillar-spider-star}}
    \end{minipage}
    \begin{minipage}[b]{.24\textwidth}
        \centering
        \subfloat[\label{fig:star}{A radius-3 star}]
        {\includegraphics[width=\textwidth,page=3]{caterpillar-spider-star}}
    \end{minipage}
    \caption{Illustration of (a)~a caterpillar, (b)~a spider (c)~a radius-3 star.}
    \label{fig:caterpillar-spider-star}
\end{figure}

%-------------------------------------------------------------------------
\subsection{Paper Structure}
\label{sec:paperstructure}
%-------------------------------------------------------------------------

This paper is structured as follow: In Section~\ref{sec:np}, we
prove that the differential coloring problem is \NPH even for planar
graphs (Theorem~\ref{thm:np-proof}). In
Section~\ref{sec:opt-reg-cat-spi}, we present tight upper bounds for
regular caterpillars (Theorem~\ref{thm:opt-reg-cat}) and spiders
(Theorem~\ref{thm:opt-spider}). In Section~\ref{sec:regcat-spiders},
we present closed-form optimal labeling schemes (more intuitive than
the known labeling scheme~\cite{miller89}) for regular caterpillars
(Theorem~\ref{thm:reg-cat}) and for spiders with path lengths all
even or all odd (Theorem~\ref{thm:spider}). In
Sections~\ref{sec:gen-cat} and \ref{sec:outerplanar}, we describe
labeling algorithms which produce close-to-optimal labeling for
caterpillars (Theorem~\ref{thm:gen-cat}) and biconnected
triangle-free outer-planar graphs (Theorem~\ref{thm:biconn}),
respectively. We conclude in Section~\ref{sec:conclusion} with open
problems and future work. 
\section{Differential Coloring is NP-complete for Planar Graphs}
\label{sec:np}
%-------------------------------------------------------------------------

In this section, we prove that the differential coloring problem is
\NPH even for planar graphs. 
%Note that differential coloring problem is
%an optimization problem. 

\begin{theorem}
Given a planar graph $G=(V,E)$ 
it is \NPH to determine the differential chromatic
number of $G$. \label{thm:np-proof}
\end{theorem}

\begin{proof}
%The problem is clearly in \NP, since a non-deterministic algorithm
%needs only to guess an assignment of colors to the vertices of the
%graph and then it is easy to verify in polynomial time whether this
%assignment corresponds to a differential coloring of value
%$k$. 
In order to prove that the problem is \NPH, we employ a
reduction from the well-known $3$-coloring problem, which is \NPC
for planar graphs 
~\cite{Garey:1979:CIG:578533}.

More precisely, let $G=(V,E)$ be an instance of the $3$-coloring
problem mentioned above, i.e., graph $G$ is an $n$-vertex planar
graph. In the following,
we will construct a new planar graph $G'$, so that $G'$ has
differential coloring of value at least $n$ if and only if $G$ is
$3$-colorable.

Graph $G'=(V',E')$ is constructed by attaching a path $v \rightarrow
v_1 \rightarrow v_2$ to each vertex $v \in V$ of $G$; see
Figs.~\ref{fig:instance} and \ref{fig:cunstruction}. Hence, we can
assume that $V'=V \cup V_1 \cup V_2$, where $V$ is the vertex set of
$G$, $V_1$ contains the first vertices of each 2-length path and
$V_2$ the second ones. Clearly, $G'$ is planar on $3n$ vertices.
Now, observe that if $G'$ is $3$-colorable, then $G$ is
$3$-colorable, as well. This is because $G$ is a subgraph of $G'$.
On the other hand, if $G$ is $3$-colorable, then $G'$ is also
$3$-colorable; for each vertex $v \in V$, simply color its neighbors
$v_1$ and $v_2$ with two distinct colors different from the color of
$v$. Next, we show that $G'$ is $3$-colorable if and only if $G'$
has differential coloring of value at least $n$.

\begin{figure}[t]
    \centering
    \begin{minipage}[b]{.32\textwidth}
        \centering
        \subfloat[\label{fig:instance}{Instance $G=(V,E)$}]
        {\includegraphics[width=\textwidth,page=1]{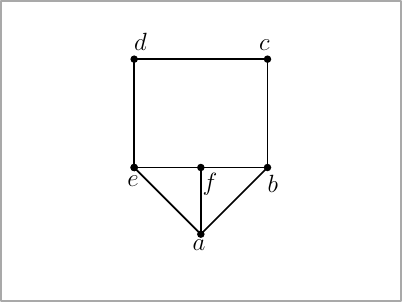}}
    \end{minipage}
    \begin{minipage}[b]{.32\textwidth}
        \centering
        \subfloat[\label{fig:cunstruction}{Graph $G'=(V',E')$}]
        {\includegraphics[width=\textwidth,page=2]{reduction}}
    \end{minipage}
    \begin{minipage}[b]{.32\textwidth}
        \centering
        \subfloat[\label{fig:labeling}{Differential labeling}]
        {\includegraphics[width=\textwidth,page=3]{reduction}}
    \end{minipage}
    \caption{(a)~An instance $G=(V,E)$ of the $3$-coloring problem;
             (b)~An instance $G'=(V',E')$ of the maximum differential coloring problem constructed based on graph $G$;
             (c)~The differential labeling of $G'$, in the case where $G$ is $3$-colorable.}
    \label{fig:reduction}
\end{figure}

First assume that $G'$ has differential coloring of value at least
$n$ and let $l: V' \to \{1, 2, \ldots , 3n\}$ be the respective
labeling. We proceed to color the vertices of $G'$. Let $u \in V'$
be a vertex of $G'$. We assign a color $c(u)$ to $u$ as follows:

\begin{enumerate}[-]
  \item If $1 \leq l(u) \leq n$, then $c(u)=1$
  \item If $n+1 \leq l(u) \leq 2n$, then $c(u)=2$
  \item If $2n+1 \leq l(u) \leq 3n$, then $c(u)=3$
\end{enumerate}

Since labeling $l$ guarantees a differential coloring of value at
least $n$, no two vertices with the same color are adjacent. Hence,
coloring $c$ is a $3$-coloring for $G'$.

Now, consider the case where $G'$ is $3$-colorable. Let $C_i
\subseteq V$ be the set of vertices of the input graph $G$  with
color $i$, $i=1,2,3$. Obviously, $C_1 \cup C_2 \cup C_3 = V$. We
proceed to compute a labeling $l$ of the vertices of graph $G'$ as
follows (see Fig.~\ref{fig:labeling}):

\begin{enumerate}[-]
  \item Vertices in $C_1$ are labeled with labels from $1$ to $|C_1|$.
  \item Vertices in $C_2$ are labeled with labels from $n+|C_1|+1$ to $n+|C_1|+|C_2|$.
  \item Vertices in $C_3$ are labeled with labels from $2n+|C_1|+|C_2|+1$ to $2n+|C_1|+|C_2|+|C_3|$.
  \item For a vertex $v_1 \in V_1$ neighboring to a vertex $v \in C_1$, $l(v_1) = l(v) + n$.
  \item For a vertex $v_1 \in V_1$ neighboring to a vertex $v \in C_2$, $l(v_1) = l(v) - n$.
  \item For a vertex $v_1 \in V_1$ neighboring to a vertex $v \in C_3$, $l(v_1) = l(v) - 2n$.
  \item For a vertex $v_2 \in V_2$ neighboring to a vertex $v_1 \in V_1$, $l(v_2) = l(v_1)+n +|C_2|$.
\end{enumerate}

From the above it follows that the label difference between (i)~any
two vertices in $G$, (ii)~a vertex $v_1 \in V_1$ and its neighbor $v
\in V$, and, (iii)~a vertex $v_1 \in V_1$ and its neighbor $v_2 \in
V_2$ is at least $n$. So, $G'$ has differential coloring of value at
least $n$.
\end{proof}

%-------------------------------------------------------------------------
\section{Upper Bounds for Regular Caterpillars and Spiders}
\label{sec:opt-reg-cat-spi}
%-------------------------------------------------------------------------

In this section, we establish new upper bounds for $\dc(G)$, when
$G$ is a regular caterpillar or a spider.  Then, we show that the
Miller-Pritikin labeling scheme is optimal for these classes of
graphs.

\begin{theorem}
Let $G$ be a $\Delta$-regular caterpillar with $n$ vertices. If $G$
has an odd number of spine vertices, then $\dc(G) \le \lceil
\frac{n-\Delta}{2} \rceil$. Otherwise $\dc(G) \le \lfloor
\frac{n}{2} \rfloor$.
\label{thm:opt-reg-cat}
\end{theorem}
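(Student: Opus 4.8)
The plan is as follows. When the number $s$ of spine vertices is even, the stated bound $\dc(G)\le\lfloor n/2\rfloor$ is nothing but Theorem~\ref{thm:upperbd}, since caterpillars are connected, so I would dispose of that case in one line and concentrate on odd $s$. Here I would first record the bookkeeping: $n=s(\Delta+1)$, so $n-\Delta=\Delta(s-1)+s$ is odd, hence $\lceil\frac{n-\Delta}{2}\rceil=\frac{n-\Delta+1}{2}$, and it suffices to prove $\dc(G)\le\frac{n-\Delta+1}{2}$. Fixing an optimal labeling $c$ with value $d=\dc(G)$, I would note that Theorem~\ref{thm:upperbd} already gives $2d\le n$, so that $\{1,\dots,n\}$ splits into the ``low'' block $L=\{1,\dots,d-1\}$, the ``middle'' block $M=\{d,\dots,n-d+1\}$, and the ``high'' block $H=\{n-d+2,\dots,n\}$.

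The core of the argument is a dichotomy on where the spine vertices are labeled. In the first case, some spine vertex $v$ has $c(v)\in M$; then each of its $\Delta$ legs must receive a label at distance at least $d$ from $c(v)$, i.e.\ a label in $\{1,\dots,c(v)-d\}\cup\{c(v)+d,\dots,n\}$, and because $c(v)\in M$ this set has exactly $n-2d+1$ elements (one of the two intervals may be empty). Hence $\Delta\le n-2d+1$, which rearranges to the desired bound. In the second case every spine vertex is labeled inside $L\cup H$; since two labels within $L$ (or within $H$) differ by at most $d-2<d$, no two consecutive spine vertices can share a block, so $L$ and $H$ must alternate strictly along the spine path, and because $s$ is odd one block carries $\frac{s+1}{2}$ spine vertices and the other $\frac{s-1}{2}$. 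Using the label-reversal symmetry $i\mapsto n+1-i$, I would assume $L$ carries the larger share $p=\frac{s+1}{2}$. A leg of a spine vertex labeled in $L$ cannot go below that label minus $d$ (a nonpositive number), so it must be labeled at least $d+1$; thus the $p\Delta$ legs of the $L$-spine vertices, together with the $\frac{s-1}{2}$ spine vertices labeled in $H\subseteq\{d+1,\dots,n\}$, occupy distinct labels among the $n-d$ labels of $\{d+1,\dots,n\}$. This yields $\frac{s+1}{2}\Delta+\frac{s-1}{2}\le n-d$, and substituting $n=s(\Delta+1)$ and simplifying gives exactly $d\le\frac{n-\Delta+1}{2}$. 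Since the two cases are exhaustive, the theorem follows.

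I expect the main obstacle to be recognizing that the clean ``inspect a central spine vertex'' bound of the first case does not suffice on its own: a labeling may legitimately push every spine vertex out to the extreme blocks $L\cup H$, so that no spine vertex sees the tight count $n-2d+1$ for its legs. The non-routine step is therefore the second case, where one must observe that the spine path then behaves like a properly two-colored odd path and exploit the resulting $\frac{s+1}{2}$-versus-$\frac{s-1}{2}$ imbalance. This imbalance is precisely what makes the bound coincide with the Miller--Pritikin value $\min\{|U|,|W|\}=\frac{n-\Delta+1}{2}$ for these caterpillars, which is presumably why the theorem is stated with a ceiling rather than the cruder $\lfloor n/2\rfloor$.
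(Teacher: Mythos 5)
Your proof is correct and follows essentially the same route as the paper's: excluding spine vertices from a middle band of labels by counting the labels available to their $\Delta$ legs, then using the forced low/high alternation along the odd spine and a pigeonhole count of the $\frac{s+1}{2}\Delta$ legs plus $\frac{s-1}{2}$ high spine vertices. The only difference is presentational --- you derive the bound directly via a two-case dichotomy on $d=\dc(G)$, whereas the paper argues by contradiction with the fixed value $c^*=\lceil\frac{n-\Delta}{2}\rceil+1$.
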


\begin{proof}
If $G$ has an even number of spine vertices, then by
Property~\ref{prp:upperbd} it follows that $\dc(G) \le \lfloor
\frac{n}{2} \rfloor$. We will show that, when $G$ has an odd number
of spine vertices, $\dc(G) \le \lceil\frac{n-\Delta}{2}\rceil$. Let
the number of spine vertices be $s = 2k +1$, for some $k \geq 1$.
Then, the total number of vertices of $G$ is $n = (2k+1)(\Delta
+1)$. Note that $\lceil\frac{n-\Delta}{2}\rceil = 1 + k(\Delta +1)$.
For a proof by contradiction assume that there exists a labeling of
value $c^* = \lceil\frac{n-\Delta}{2}\rceil + 1$.

\begin{proof}
\begin{lemma}
No spine vertex is labeled in the interval
$[\lceil\frac{n-\Delta}{2}\rceil,\lceil\frac{n + \Delta}{2}\rceil]$
\label{lem:spines}
\end{lemma}
\begin{proof}
Assume to the contrary that $ i \in
 [\lceil\frac{n-\Delta}{2}\rceil,\lceil\frac{n + \Delta}{2}\rceil]$
is a spine vertex label. Consider the labels that can be assigned to
the $\Delta$ legs of $i$ (with a slight abuse of notation $i$ also
refers to the vertex labeled $i$). To achieve value $c^*$ the label
for a leg of $i$ can either lie in the interval $L = [1, i-
(\lceil\frac{n-\Delta}{2}\rceil + 1)]$ or in the interval $H= [i +
\lceil\frac{n-\Delta}{2}\rceil + 1,n]$. We consider three cases for
the label of $i$.
\begin{enumerate}
\item[Case 1:] $ i \ne \lceil\frac{n-\Delta}{2}\rceil$ and $ i \ne
\lceil\frac{n+\Delta}{2}\rceil$. In this case, the total number of
labels in $L$ and $H$ is:
\begin{eqnarray*}
i - \left(\left\lceil\frac{n-\Delta}{2}\right\rceil + 1\right) + n - \left(i + \left\lceil\frac{n-\Delta}{2}\right\rceil \right) & = &  n - \left(2 \left\lceil\frac{n-\Delta}{2}\right\rceil  + 1 \right) \\
& = &  n - (2k (\Delta + 1) + 3 ) \\
& = &  n - ( n + 2 - \Delta) ~ = ~ \Delta - 2
\end{eqnarray*}

\item[Case 2:] $i = \lceil\frac{n-\Delta}{2}\rceil$. In this case,
$L$ is empty and all leg labels lie in interval $H$. Hence, the
total number of labels in $H$ is:
\begin{eqnarray*}
n - \left(2 \cdot \left\lceil\frac{n-\Delta}{2}\right\rceil\right) & = &  n - (2 \cdot ((k \cdot (\Delta + 1)) + 1) \\
& = &  ((2k+1)\cdot(\Delta+1)) - (2 \cdot ((k \cdot (\Delta + 1)) + 1))\\
& = &  \Delta -1
\end{eqnarray*}

\item[Case 3:] $ i = \lceil\frac{n+\Delta}{2}\rceil$. In this case,
$H$ is empty and all leg labels lie in interval $L$. Similarly to
the previous case, the total number of labels in $L$ is: $i -
\left(\left\lceil\frac{n-\Delta}{2}\right\rceil + 1\right) = \Delta
- 1$
\end{enumerate}
\noindent In all cases the labels for the legs of $i$ are
insufficient. So, we have a contradiction.
\end{proof}
\end{proof}

Back to the theorem: By Lemma~\ref{lem:spines}, it follows that
labels of spine vertices either lie in the interval $L_s =
[1,\lceil\frac{n-\Delta}{2}\rceil -1]$ or in the interval $H_s =
[\lceil\frac{n+\Delta}{2}\rceil +1 ,n]$. Observe that the maximum
difference between any two elements in the interval $L_s$ is
$\lceil\frac{n-\Delta}{2}\rceil - 2$. This suggests that in order to
achieve differential coloring $c^*$, adjacent spine vertices cannot
both be labeled from the interval $L_s$. Similarly, we can prove
that adjacent spine vertices cannot both be labeled from the
interval $H_s$, as the maximum difference between two elements in
$H_s$ is $n - (\lceil\frac{n+\Delta}{2}\rceil + 1) = n -
(\lceil\frac{n-\Delta}{2}\rceil + \Delta +1) =  n - (k \cdot (\Delta
+ 1) + 1 +  \Delta +1) = \lceil\frac{n-\Delta}{2}\rceil  < c^*$.

From the above it follows that the labels for spine vertices must
alternate between interval $L_s$ and $H_s$ such that for the labels
of the $2k + 1$ spine vertices, one of the intervals supplied $k+1$
labels and other interval supplied $k$ labels. Assume without loss
of generality that $L_s$ supplies $k+1$ labels. In order to achieve
differential coloring $c^*$, the $(k+1)\Delta$ legs of these spine
vertices must all have labels in the interval $I =
[\lceil\frac{n-\Delta}{2}\rceil + 2,n]$. As $\Delta \ge 1$, interval
$I \supseteq H_s$, and so $I$ must also contain the $k$ labels $H_s$
supplies for spine vertices. Thus, in total $I$ must contain at
least $(k+1)\Delta + k$ labels. However, the size of the interval
$I$ is:
\begin{eqnarray*}
1 +  n - (\lceil\frac{n-\Delta}{2}\rceil +2) & = &  1 + ((2k+1)\cdot(\Delta+1)) - ((k \cdot (\Delta + 1)) + 1) - 2 \\
& = &  k \cdot\Delta + k + \Delta -1\\
& < &  (k+1)\Delta + k
\end{eqnarray*}
So, we have a contradiction.
\end{proof}

\begin{corollary}
The Miller-Pritikin labeling scheme is optimal for regular
caterpillars.
\label{cor:opt-miller-reg-cat}
\end{corollary}
\begin{proof}
Let $G$ be a regular caterpillar on $n$ vertices. First, consider
the case where $G$ has an even number of spine vertices, say $s =
2k$ for some $k \geq 1$. $G$ is a bipartite graph
whose vertices form two disjoint sets $U$ and $V$, where $U$
consists of the $k$ odd spine vertices and the $k\Delta$ legs of the
even spine vertices, and, $V$ consists of the $k$ even spine
vertices and the $k\Delta$ legs of the odd spine vertices. So, $ |U|
=  |V| = k + k\Delta = \frac{n}{2}$. Since the Miller-Pritikin
labeling scheme yields a labeling with value equal to the size of
the smaller vertex set, the labeling is optimal by
Property~\ref{prp:upperbd}.

Now, consider the case where $G$ has an odd number of spine
vertices, say $s =2k+1$ for some $k \geq 1$. In this case, $U$
consists of the $k$ even spine vertices and the $(k+1)\Delta$ legs
of the odd spine vertices, and, $V$ consists of $k+1$ odd spine
vertices and the $k\Delta$ legs of the even spine vertices.  So, $
|U| = k + (k+1)\Delta$ and $ |V| = k + 1 + k\Delta$, and $\min\{
|U|, |V|\} = k + 1 + k\Delta = \lceil\frac{n-\Delta}{2}\rceil$.
Thus, by Theorem~\ref{thm:opt-reg-cat} the Miller-Pritikin labeling
scheme is optimal.
\end{proof}

In the following, we present a tight upper bound for spider graphs.
However, before presenting our labeling method, we make a few simple
observations about spider graphs. Let $p$ be the number of paths
connected to the center vertex $v_c$ in a spider graph $G$. Recall
that by $N_e$, $N_o$ and $N_l$ we denote the number of even level,
number of odd level and number of vertices in level $l$,
respectively. Then, the number of vertices of $G$ is:

\begin{equation}
n = N_{e} + N_{o} + 1
\label{eq:nspider}
\end{equation}

Each of the $p$ paths of $G$ starts with an odd level vertex and
alternates between even and odd levels. It follows that on each path
the number of odd level vertices is at most one more than the
even level vertices. Summing over all $p$ paths we get:

\begin{equation}
N_{o} - N_{e} \le p
\label{eq:odd-even}
\end{equation}

\begin{theorem} \label{thm:opt-spider}
If $G$ is a spider graph with $N_e$ even level vertices, then
$\dc(G) \le N_{e} + 1$.
\end{theorem}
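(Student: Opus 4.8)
The plan is to prove the stronger statement $\dc(G)\le N_e$, which of course implies $\dc(G)\le N_e+1$. Fix an optimal labeling $c$ and set $d=\dc(G)$. Write the spider as a centre $v_c$ together with $p$ legs, where leg $i$ is a path $L_{i,1}-L_{i,2}-\cdots-L_{i,\ell_i}$ attached to $v_c$ through the edge $v_cL_{i,1}$; thus $n=1+\sum_{i}\ell_i$ and $N_e=1+\sum_{i}\lfloor\ell_i/2\rfloor$, the even levels of leg $i$ being $L_{i,2},L_{i,4},\dots$. Let $A=c^{-1}(\{1,\dots,d\})$. Any two vertices whose labels differ by less than $d$ are non‑adjacent, so $A$ is an independent set and $|A|=d$. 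I split into three cases according to where $c(v_c)$ lies.

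If $c(v_c)\le d$, then $v_c\in A$, so no $L_{i,1}$ (a neighbour of $v_c$) lies in $A$; hence $A\cap\mathrm{leg}_i$ is an independent set of the sub‑path $L_{i,2}-\cdots-L_{i,\ell_i}\cong P_{\ell_i-1}$, giving $|A\cap\mathrm{leg}_i|\le\lceil(\ell_i-1)/2\rceil=\lfloor\ell_i/2\rfloor$. Summing, $d=|A|=1+\sum_i|A\cap\mathrm{leg}_i|\le 1+\sum_i\lfloor\ell_i/2\rfloor=N_e$. The case $c(v_c)\ge n-d+1$ is symmetric: replace $c$ by the labeling $v\mapsto n+1-c(v)$, which has the same differential $d$ and puts $v_c$'s label in $\{1,\dots,d\}$, and apply the previous case.

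The remaining, and main, case is $d+1\le c(v_c)\le n-d$, i.e.\ $v_c$ receives a ``middle'' label. Every neighbour $u$ of $v_c$ satisfies $|c(u)-c(v_c)|\ge d$, so $c(u)$ avoids the $2d-1$ labels of the window $\{c(v_c)-d+1,\dots,c(v_c)+d-1\}$; moreover this whole window lies inside $\{1,\dots,n\}$ since $c(v_c)-d+1\ge 2$ and $c(v_c)+d-1\le n-1$. As $v_c$ has $p$ neighbours with distinct labels, $p\le n-(2d-1)$, that is $2d\le n-p+1$. A short computation then gives $2N_e-(n-p+1)=p-\sum_i\bigl(\ell_i-2\lfloor\ell_i/2\rfloor\bigr)=p-\#\{i:\ell_i\text{ odd}\}=\#\{i:\ell_i\text{ even}\}\ge 0$, so $2d\le n-p+1\le 2N_e$ and again $d\le N_e$.

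In every case $d\le N_e\le N_e+1$, which is the claim. The only real work is the middle case: once $v_c$ sits on a middle label its large degree ``burns'' $2d-1$ labels near $c(v_c)$, and the arithmetic relating $p$, $n$ and $N_e$ finishes it. Two minor points to keep straight are the boundary situation $2d=n$, where the middle band is empty so one is automatically in one of the first two cases, and degenerate spiders with $p\le 2$ (paths), which are covered by exactly the same argument.
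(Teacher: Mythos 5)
Your argument does establish Theorem~\ref{thm:opt-spider}, but not the strengthening you announce, because your $N_e$ is not the paper's $N_e$. You set $N_e = 1+\sum_i\lfloor\ell_i/2\rfloor$, counting the centre (level $0$) as an even-level vertex; the paper's convention excludes the centre (its proof uses $n = N_e+N_o+1$, the $+1$ being the centre), so its $N_e$ equals $\sum_i\lfloor\ell_i/2\rfloor$. Hence the quantity you bound $\dc(G)$ by is exactly the paper's $N_e+1$ --- precisely the claimed inequality and nothing more. The genuine strengthening $\dc(G)\le N_e$ (centre excluded) is false: a star $K_{1,p}$ has $N_e=0$ and $\dc=1$, and Theorem~\ref{thm:spider} exhibits labelings of value $N_e+1$ for every spider whose paths all have odd length, so $N_e+1$ is tight there. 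Once this is understood, your three cases are each correct: the $d$ lowest labels form an independent set whose trace on each leg is an independent set of $P_{\ell_i-1}$, giving $d\le 1+\sum_i\lceil(\ell_i-1)/2\rceil$ in the extreme cases; the window of $2d-1$ forbidden labels around a ``middle'' centre label gives $2d\le n-p+1\le 2\bigl(1+\sum_i\lfloor\ell_i/2\rfloor\bigr)$; and the cases cover all of $[1,n]$ even when $2d\ge n$.

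Your route is genuinely different from the paper's. The paper argues by contradiction from a hypothetical labeling of value $N_e+2$: Lemma~\ref{lem:center} shows the centre's label must avoid the window $[N_e+1,N_e+p+1]$ (otherwise its $p$ level-$1$ neighbours run out of admissible labels), and then, with the centre's label low (or symmetrically high), it counts level by level --- using $N_{2j+1}\le N_{2j}$ --- to conclude that all $N_o$ odd-level vertices would need labels from an interval of size only $N_o-1$. Your proof replaces that interval bookkeeping with the single observation that the bottom $d$ labels induce an independent set, bounded leg by leg via the independence number of a path, plus a clean degree argument for the middle band; this is shorter and arguably more transparent. The only repair needed is notational: state your bound as $1+\sum_i\lfloor\ell_i/2\rfloor$ and identify it with the paper's $N_e+1$, dropping the claim that it improves on the theorem.
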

\begin{proof}
For a proof by contradiction suppose that there exists a labeling of
value  $c^* = N_{e} + 2$.

\begin{proof}
\begin{lemma}
The center vertex label is not in the interval $[N_{e} + 1, N_{e} + p + 1]$.
\label{lem:center}
\end{lemma}
\begin{proof}
For the sake of contradiction, let $ i \in [N_e + 1, N_e + p + 1]$
be the label of the center vertex and consider the labels that can
be assigned to the $p$ vertices of level $1$. To achieve a
differential coloring of value $c^*$, the labels of the level-$1$
vertices can either lie in the interval $L = [1, i- (N_e+2)]$ or in
the interval $H= [N_e + 2 +i, n]$. We consider three cases for the
values of $i$.
\begin{enumerate}
\item[Case 1:] $ i \ne N_{e} + 1$ and  $ i \ne N_{e} + p + 1$.
Then, by Equations~\ref{eq:nspider} and \ref{eq:odd-even} the total
number of labels in $L$ and $H$ is:

\begin{eqnarray*}
i - N_{e} - 2 + n -( i+ N_{e} + 2 ) + 1 & = &  n - 2 N_{e} -3  \\
& = &  N_{e} + N_{o} + 1 - 2 N_{e} -3 \quad\\
& = &  N_{o} - N_{e} -2 \le p -2 \quad
\end{eqnarray*}

\item[Case 2:] $ i = N_{e} + 1$. In this case, $L$ is empty and all
labels of level-$1$ vertices lie in $H$. By
Equations~\ref{eq:nspider} and \ref{eq:odd-even}, the total number
of labels in $H$ is:

\begin{eqnarray*}
n -( i+ N_{e} + 2 ) + 1 & = & n -(N_{e} + 1+ N_{e} + 2 ) + 1 \\
& = & N_{e} + N_{o} + 1-(N_{e} + 1+ N_{e} + 2 ) + 1 \quad \\
& = & N_{o} - N_{e} -1 \le p -1  \quad
\end{eqnarray*}

\item[Case 3:] $ i = N_{e} + p + 1$. In this case, $H$ is empty and
all labels of level-$1$ vertices lie in $L$. Similarly to the
previous case, the total number of labels in $L$ is: $ i - N_{e} - 2
=  N_{e} + p + 1 - N_{e} - 2 =  p -1 $

\end{enumerate}

\noindent In all cases the number of labels is less than $p$,
arriving to a contradiction and completing the proof of this lemma.
\end{proof}
\end{proof}

Back to the theorem: By Lemma~\ref{lem:center}, the center label
either lies in interval $L_c = [1,N_{e}]$ or in interval $H_c =
[N_{e} + p + 2 ,n]$. Let us first assume that the center label lies
in $L_c$ . In this case, the level-$1$ vertices should lie in the
interval $I = [N_{e} + 3 , n]$. Note that in order to achieve a
differential coloring of value $c^*=N_{e} + 2$, adjacent vertices
from neighboring levels $2j$ and $2j+1$ cannot both lie in the
interval $[1,N_{e} + 2]$. Also, $N_{2j+1} \le N_{2j}$. It follows
that the labels of at least $N_{2j+1}$ vertices lie in the interval
$I$. So, interval $I$ must contain at least $N_1 + N_3 + \cdots  =
N_{o}$ elements. The contradiction follows from the size of the
interval $I$, which is:

\begin{eqnarray*}
n - N_{e} - 3  + 1 & = & N_{e} + N_{o} + 1  - N_{e} - 3  + 1 \quad \\
& = & N_{o} -1  \quad
\end{eqnarray*}

Now, assume that the center lies in the interval $H_c$. An analogous
argument shows that the interval $I' = [1,n-N_{e}-2]$ must contain
at least $N_{o}$ elements which is more than the size of $I'$,
leading to a contradiction. As both cases result in contradictions,
this completes the proof of Theorem~\ref{thm:opt-spider}.
\end{proof}

\begin{corollary}
The Miller-Pritikin labeling scheme is optimal for spiders.
\label{cor:opt-miller-spider}
\end{corollary}
\begin{proof}
Let $G$ be a spider graph. Clearly, $G$ is a bipartite graph whose
vertices form disjoint sets $U$ and $V$, where the even level
vertices and the center vertex form $U$ and the odd level vertices
form $V$. Labeling $G$ with the Miller-Pritikin scheme gives a
differential coloring of value at least $m= \min \{ |U|, |V| \}=
\min \{ N_e +1, N_o \}$.

We now prove that $m$ is optimal. We have that $N_e \le N_o$. If
$N_e = N_o$, then $m=N_o$. By Property~\ref{prp:upperbd}, it follows
that $\dc(G)$ is at most $\lfloor n/2 \rfloor$, which by
Equation~\ref{eq:nspider} is at most $\lfloor (N_o + N_e + 1)/2
\rfloor = \lfloor (2N_o + 1)/2 \rfloor = N_o = m$. Now, assume that
$N_e < N_o$. In this case, $m = N_e + 1$ which is optimal by
Theorem~\ref{thm:opt-spider}, completing the proof.
\end{proof}

%-------------------------------------------------------------------------
\section{Optimal labeling for regular caterpillars and spiders with path lengths all even or all odd}
\label{sec:regcat-spiders}
%-------------------------------------------------------------------------

In this section, we describe two optimal labeling schemes for
regular caterpillars and spiders with path lengths all even or all
odd, respectively. Note that by
Corollaries~\ref{cor:opt-miller-reg-cat} and
\ref{cor:opt-miller-spider} the Miller-Pritikin labeling scheme is
also optimal for these classes of graphs. However, the labeling
schemes that we present in this section are more intuitive, more
structured and therefore of a simple nature compared to the
Miller-Pritikin labeling scheme.

%-------------------------------------------------------------------------
\subsection{Optimal labeling for regular caterpillars}
\label{sec:regcat}
%-------------------------------------------------------------------------

Let $G$ be an $n$-vertex regular caterpillar in which each spine
vertex has $\Delta \ge 1$ legs. Let $s$ denote the number of spine
vertices. Then, as $n = s \cdot (\Delta+1)$, we have $s =
\frac{n}{\Delta+1}$.

It is always good to label all legs of a spine vertex $v$ from an
interval of consecutive numbers, since the maximum difference
between a spine vertex $v$ and its legs depends only on the
difference between the label of $v$ and the highest or lowest label
of the legs of $v$.

First, consider the case that there is an even number of spine
vertices, say $s = 2k$ for some $k \geq 1$. We label the spine
vertices using the $k$ lowest and $k$ highest numbers in an
alternating fashion. Starting with the leftmost spine vertex and
moving to the right, we label the spine vertices as $1,  n - k +1,
2, n - k +2, \ldots $ and so on, ending at the rightmost spine
vertices with numbers $k$ and $n$; see Fig.~\ref{sfg:regcat-even}.
There are only two values for the differences between adjacent spine
vertices, namely $n - k$ and $n - k -1$. As $\Delta \ge 1$, the
difference is at least $n-k-1 = 2k\Delta + k -1 \ge k(\Delta +1) =
n/2$. We denote by $L_s$ the set of spine vertices with labels from
$1$ to $k$, and, by $H_s$ the spine vertices with labels from
$n-k+1$ to $n$.

Next, we split the middle range $[k+1,n-k]$ into two ranges
$L_{\ell} = [k+1,\frac{n}{2}]$ and $H_{\ell} = [\frac{n}{2}+1,n-k]$.
We label the legs of $L_s$ from the range $H_{\ell}$ and the legs of
$H_s$ from the range $L_{\ell}$ as follows. For a spine vertex from
$L_s$ with label $1 \le i \le k$, we label its $\Delta$ legs with
numbers from the interval $[\frac{n}{2}+((i-1) \cdot \Delta) + 1,
\frac{n}{2}+ i \cdot \Delta]$. For a spine vertex from $H_s$ with
label $j = n - k + i$ between $n - k +1 $ and $n$, we label its
$\Delta$ legs with numbers from the interval $[k+((i-1) \cdot
\Delta) + 1, k + i \cdot \Delta]$. It follows that the difference
between a low spine vertex from $L_s$ and one of its legs is at
least:

\begin{eqnarray*}
\frac{n}{2} + (i-1) \cdot \Delta +1 - i & = & n/2 +i(\Delta - 1) - \Delta +1  \quad \\
& = & n/2 +(i-1)(\Delta - 1) \quad
\end{eqnarray*}

Then, it is not difficult to see that for $i=1$ this difference is
minimized and equals to $\frac{n}{2}$. Analogously, the difference
between a high spine vertex $j = n - k + i$ and one of its legs is
at least:

\begin{eqnarray*}
j - (k + i \cdot \Delta) & = & n - k + i - k - i \cdot \Delta \quad \\
& = & n-2k + i(1-\Delta) \quad
\end{eqnarray*}

In this case, the difference is minimized for the largest possible
$i$, that is, $i = k$, and using the fact that $k=\frac{n}{2(\Delta
+ 1)}$, the difference is again $\frac{n}{2}$. Hence, there exists a
labeling for which the maximum difference is $\frac{n}{2}$.

\begin{figure}[t]
    \centering
    \begin{minipage}[b]{.48\textwidth}
        \centering
        \subfloat[\label{sfg:regcat-even}{Even number of spine vertices.}]
        {\includegraphics[width=\textwidth,page=1]{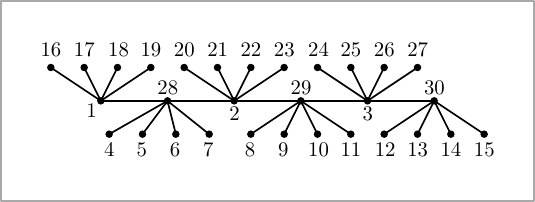}}
    \end{minipage}
    \begin{minipage}[b]{.48\textwidth}
        \centering
        \subfloat[\label{sfg:regcat-odd}{Odd number of spine vertices.}]
        {\includegraphics[width=\textwidth,page=2]{regular-caterpillars}}
    \end{minipage}
    \caption{Our optimal labeling for regular caterpillars.}
    \label{fig:regcat}
\end{figure}

Now, consider the case where the number of spine vertices is odd,
say $s = 2k+1$ for some $k \geq 1$; see Fig.~\ref{sfg:regcat-odd}.
We follow the scheme as above assigning the lowest $k+1$ numbers and
the highest $k$ numbers in an alternating fashion to the spine
vertices. The differences between adjacent spine vertices are $n -
k$ and $n - k - 1$ which is at least:

\begin{eqnarray*}
n-k-1 & = & 2k\Delta + \Delta + k  \quad \\
& \ge & k\Delta + 1 +k \quad \\
& \ge & \lceil \frac{n -\Delta}{2} \rceil \quad
\end{eqnarray*}

Let $L_s$ denote the spine vertices with labels $\le k+1$ and $H_s$
be the spine vertices with labels $> n-k$. As before, we divide the
middle range $[k+2,n-k]$ into two ranges $L_{\ell} =
[k+2,\lceil\frac{n - \Delta}{2}\rceil]$ and $H_{\ell} =
[\lceil\frac{n-\Delta}{2}\rceil  + 1,n-k]$ and label the legs of
$H_s$ with number from $L_{\ell}$ and the legs of $L_s$ with numbers
from $H_{\ell}$. For a spine vertex from $L_s$ with label $1\le i
\le k+1$, we label its $\Delta$ legs with numbers from the interval
$[\lceil\frac{n-\Delta}{2}\rceil + ((i-1) \cdot \Delta) + 1,
\lceil\frac{n-\Delta}{2}\rceil + i \cdot \Delta ]$. For a spine
vertex from $H_s$ with label $j = n - k + i$ between $n-k + 1$ and
$n$, we label its $\Delta$ legs with numbers from the interval $[k
+((i-1) \cdot \Delta) + 2,k + i \cdot \Delta +1]$.

The difference between a low spine vertex $i$ and its legs is at
least $\lceil\frac{n-\Delta}{2}\rceil + ((i-1) \cdot \Delta) + 1 -
i$ which is minimal for $i=1$, namely
$\lceil\frac{n-\Delta}{2}\rceil$ as above. The difference between a
high spine vertex $j = n - k + i$ and its legs is at least:

\begin{eqnarray*}
j - (k + i \cdot \Delta +1) & = & n - k + i- (k + i \cdot \Delta +1)  \quad \\
& \ge & n-2k+i(1-\Delta)-1 \quad
\end{eqnarray*}

In this case, the difference is minimized for $i$ as large as
possible, that is, $i = k$. Using the fact that $k = \frac{n -
(\Delta + 1)}{2(\Delta + 1)}$ and as $\Delta \ge 1$, we have that
the difference for $i = k$ is:

\begin{eqnarray*}
n-2k +k(1-\Delta)-1 & = & n-k -k\Delta -1 =n-k(1+\Delta) - 1  \quad \\
& \geq & \lceil \frac{n -\Delta}{2} \rceil \quad
\end{eqnarray*}

From the above, it follows that for a regular caterpillar with even
number of spine vertices our labeling method achieves difference
$\frac{n}{2}$ and for odd number of spine vertices it achieves
difference $\lceil \frac{n-\Delta}{2} \rceil$. Both of these are
optimal by Theorem~\ref{thm:opt-reg-cat}. This is summarized in the
following theorem.

\begin{theorem} \label{thm:reg-cat}
Let $G$ be a regular caterpillar with $n$ vertices. There exists an
optimal labeling of $G$ with value $\frac{n}{2}$
when $G$ has an even number of spine vertices, and with value
$\lceil \frac{n-\Delta}{2} \rceil$ otherwise.
\end{theorem}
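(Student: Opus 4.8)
The plan is to exhibit an explicit, self‑contained labeling that meets the upper bounds of Theorem~\ref{thm:opt-reg-cat}; optimality then follows immediately from that theorem (and from Theorem~\ref{thm:upperbd} in the even case). Let $u_1, u_2, \ldots, u_s$ be the spine and let $\ell_{i,1}, \ldots, \ell_{i,\Delta}$ be the legs of $u_i$, so $n = s(\Delta+1)$. Since $G$ is bipartite, use its (unique) bipartition $V(G) = A \cup B$ in which $u_i \in A$ precisely when $i$ is odd and $\ell_{i,t} \in A$ precisely when $i$ is even. Counting gives $|A| = \lceil s/2\rceil + \lfloor s/2\rfloor\,\Delta$, which one checks equals $\lfloor n/2\rfloor$ when $s$ is even and equals $\lceil\frac{n-\Delta}{2}\rceil$ when $s$ is odd, and in both cases $|A|\le|B|$. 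Write $D$ for this common value; the target is a labeling of value $D$.

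The labeling assigns the labels $1,2,\ldots,|A|$ to $A$ and the labels $|A|+1,\ldots,n$ to $B$, according to two linear orders. List $A$ as: first the odd‑indexed spine vertices $u_1, u_3, u_5, \ldots$, then the legs of the even‑indexed spine vertices grouped by their spine vertex in increasing order, $\ell_{2,1},\ldots,\ell_{2,\Delta},\ell_{4,1},\ldots$; the $k$‑th vertex in this list receives label $k$. List $B$ as: first the legs of the odd‑indexed spine vertices grouped by their spine vertex in increasing order, $\ell_{1,1},\ldots,\ell_{1,\Delta},\ell_{3,1},\ldots$, then the even‑indexed spine vertices $u_2, u_4, \ldots$; the $j$‑th vertex in this list receives label $|A|+j$. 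Informally, the high‑degree spine is stretched as far as possible, its $A$‑part taking the smallest and its $B$‑part the largest labels, while the low‑degree legs are slotted in between.

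The crux is a staircase property: for every edge $(a,b)$ with $a$ the $k$‑th vertex of the $A$‑list and $b$ the $j$‑th vertex of the $B$‑list, one has $j\ge k$. Granting it, $|c(a)-c(b)| = (|A|+j)-k \ge |A| = D$, so the labeling has value at least $D$. I would verify the property by edge type. For a spine--spine edge the endpoint in $B$ is an even‑indexed spine vertex, whose $B$‑index is at least $\lceil s/2\rceil\Delta + 1$, exceeding $\lceil s/2\rceil$, an upper bound on the $A$‑index of any spine vertex in $A$. For a spine--leg edge whose spine vertex $u_i$ lies in $A$ (so $i$ is odd and, say, $u_i$ is the $m$‑th odd spine vertex, with $A$‑index $m$), the legs of $u_i$ occupy $B$‑indices at least $(m-1)\Delta+1\ge m$. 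For a spine--leg edge whose spine vertex $u_i$ lies in $B$ (so $i$ is even and $u_i$ is the $m$‑th even spine vertex, with $B$‑index $\lceil s/2\rceil\Delta+m$), the legs of $u_i$ occupy $A$‑indices at most $\lceil s/2\rceil+m\Delta$, and the needed inequality $\lceil s/2\rceil\Delta+m\ge\lceil s/2\rceil+m\Delta$ rearranges to $(\lceil s/2\rceil-m)(\Delta-1)\ge 0$, which holds since $m\le\lfloor s/2\rfloor$ and $\Delta\ge1$.

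Hence $\dc(G)\ge D$, and with Theorem~\ref{thm:opt-reg-cat} this forces $\dc(G)=D$, so the labeling is optimal with value $\lfloor n/2\rfloor$ (even $s$) or $\lceil\frac{n-\Delta}{2}\rceil$ (odd $s$), and the bound is attained, e.g.\ along the edge from $u_1$ to its first leg. The real difficulty is organizational rather than computational: the obstacle is to commit to two orders on $A$ and $B$ for which the single monotonicity condition $j\ge k$ holds simultaneously across all three edge types; once these orders are fixed, each of the three checks reduces to a one‑line inequality.
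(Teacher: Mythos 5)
Your proposal is correct and, despite the different packaging, produces exactly the labeling the paper uses: your $A$-list and $B$-list assign odd-position spine vertices the labels $1,\ldots,\lceil s/2\rceil$, even-position spine vertices the top $\lfloor s/2\rfloor$ labels, and the legs the corresponding middle blocks, which coincides vertex-for-vertex with the paper's construction in both the even and odd cases. The only difference is presentational --- you verify all three edge types at once via the single monotonicity claim $j\ge k$ rather than the paper's separate difference computations --- so this is essentially the same proof.
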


%-------------------------------------------------------------------------
\subsection{Labeling spiders with path lengths all even or all odd}
\label{sec:evenoddspider}
%-------------------------------------------------------------------------

Let $G$ be a $n$-vertex spider consisting of $p$ paths. Recall that
by $N_l$ we denote the number of vertices at level $l$. For each $q
\in [1, N_l]$, let $v_{l, q}$ be the level-$l$ vertex that belongs
to the $q$-th path out of the paths containing level-$l$ vertices.

\begin{theorem}
Let $G$ be a spider graph with $N_e$ even level vertices. If the
paths of $G$ are all of odd length or all of even length, there
exist optimal labeling for $G$ with value $N_e+1$ and $N_e$,
respectively. \label{thm:spider}
\end{theorem}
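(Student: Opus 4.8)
The plan is to reduce each half of the statement to the exhibition of one explicit labeling, since the matching upper bounds are in hand. If all legs of $G$ have odd length, Theorem~\ref{thm:opt-spider} already yields $\dc(G)\le N_e+1$. If all legs have even length, then a short count (each leg contributes equally many even-level and odd-level vertices) shows $N_e=\lfloor n/2\rfloor$, so the trivial bound of Theorem~\ref{thm:upperbd} yields $\dc(G)\le N_e$. Hence it suffices to build a labeling of value $N_e+1$ in the first case and of value $N_e$ in the second. (Abstract existence of such optimal labelings also follows from Corollary~\ref{cor:opt-miller-spider}; the point here is a direct and more transparent scheme.)

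The constructions are organized around the canonical $2$-coloring of the tree: let $A$ be the center together with all even-level vertices, $B$ all odd-level vertices, and cut $\{1,\dots,n\}$ into three consecutive blocks $L,M,H$. When all legs are odd we have $|A|=N_e+1\le|B|$ with the center on the smaller class, so I put all of $A$ into the middle block $M$ (of size exactly $N_e+1$) and split $B$ between $L$ and $H$. A short count shows the budget of admissible $B$-labels for a vertex of $A$ --- those in $L\cup H$ at distance at least $N_e+1$ from its label --- is exactly $|B|-|A|+1=p$, which is just enough for the center's $p$ neighbors and more than enough for the $2$ neighbors of any interior even-level vertex; this is why the extra $+1$ is affordable. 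When all legs are even we have $|A|=N_e+1>|B|=N_e$, so the roles reverse: $B$ goes in the middle block and $A$ is split between $L$ and $H$, the center automatically lands at an extreme, and placing it at the very end of the range forces its $p$ level-$1$ neighbors into the opposite end. In both cases the labeling is finished by distributing the $p$ legs so that roughly half begin in $L$ and half in $H$, then walking outward along each leg and alternately assigning labels from $M$ and from $L\cup H$, arranged so that adjacent labels differ by at least $N_e+1$ (resp.\ $N_e$); in the even case the leaves lie at even level (hence in $A$) and are placed at the labels nearest the block boundaries, where the admissible budget has shrunk to $1$.

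Correctness is then checked on three kinds of edges: the $p$ edges at the center, the interior edges joining a level-$2k$ vertex to a level-$(2k\pm1)$ vertex, and the leaf edges. The alternation puts the endpoints of any edge into distinct blocks and the chosen spacing gives the required gap. What remains is to confirm that the map is a bijection onto $\{1,\dots,n\}$ and that the parity bookkeeping is consistent --- the floors and ceilings in the split of the legs between $L$ and $H$, the sizes of $L$ and $H$, and, in the even case, the placement of the leaves and of the center at the tight end of the range.

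The principal obstacle is the branching at the center. For a path the optimal alternating labeling essentially writes itself, but in a spider all $p$ legs share the center, so the labels of the $p$ level-$1$ vertices are rigidly coupled --- in the odd case they must occupy precisely the slots the center's budget leaves free --- and this rigidity cascades outward along each leg. Arranging the split of the legs between the low and high ends so that it works simultaneously for all legs, and seeing cleanly why the odd-length case gains the extra unit while the even-length case does not (the center lies on the smaller color class in the first situation and on the larger one in the second), is where the real effort goes; the rest is careful but routine edge-by-edge checking.
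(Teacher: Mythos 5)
Your upper-bound halves are correct and coincide with the paper's: Theorem~\ref{thm:opt-spider} for the all-odd case, and the observation that $N_e=N_o$ forces $\lfloor n/2\rfloor=N_e$ in the all-even case. Your counting of the center's ``budget'' in the odd case ($N_o-N_e=p$ admissible slots for its $p$ neighbors, independent of where in the middle block the center sits) is also right and mirrors the arithmetic behind Lemma~\ref{lem:center}. The block architecture you propose for the odd case (center and even levels in a middle block of size $N_e+1$, odd levels split between the two extremes) is essentially the paper's: it places the center at $\lceil n/2\rceil$, sends roughly half the paths' odd-level vertices to the low end and half to the high end, and keeps the even-level vertices just below and just above the middle.

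The genuine gap is that the theorem asserts the \emph{existence} of a labeling and your argument never actually produces one. You correctly identify the crux --- the $p$ level-$1$ labels are rigidly forced by the center's budget, and this rigidity cascades outward, so the split of the paths between the low and high ends must be chosen so that all $p$ legs can be threaded simultaneously --- but you then defer exactly that step as ``where the real effort goes'' and ``careful but routine checking.'' A per-vertex budget count is only a necessary condition: each odd-level label is shared between the constraints of its two even-level neighbors, so feasibility is a matching condition, not a counting one, and nothing in your outline certifies it. The paper closes this by writing explicit closed-form labels $v_{l,q}\mapsto(\cdot)$ indexed by level and by path, with the paths sorted in decreasing order of length; that sorting is what makes the telescoping sums $\sum_k(N_{2k+1}-N_{2k+2})\ge 0$ come out nonnegative in the edge-by-edge verification, and is the ingredient your sketch is missing. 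Separately, your even-length architecture (odd levels in a middle block, even levels split between $L$ and $H$, center at an extreme) differs from the paper's and is harder than necessary: the paper simply labels the center $1$, the even levels $[2,N_e+1]$, and the odd levels $[N_e+2,n]$, each block ordered by increasing level, and every edge difference is at least $N_e$ by a one-line telescoping computation. Your three-block variant can be made to work (e.g.\ it does for a single even path), but the constraints tying $|L|$, $p$, and the placement of the leaves near the block boundaries are exactly the kind of bookkeeping you would need to carry out, and have not.
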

\begin{proof}
We consider the cases where all $p$ paths of $G$ are either of odd
length or of even length separately. We first consider the case
where all $p$ paths have even length; see
Fig.~\ref{sfg:spider-alleven}. We label the center vertex as $1$.
The $N_{e}$ even level vertices will be labeled with numbers from
interval $I_e = [2,N_{e}+1]$ in increasing order of levels, i.e.,
starting with the level-$2$ vertices, followed by level-$4$
vertices, etc. The $N_{o}$ odd level vertices will be labeled with
numbers from the interval $I_o = [N_{e}+2, n]$ in the same way,
i.e., starting with the level-$1$ vertices, followed by the
level-$3$ vertices, etc. For each level, we order the vertices in
decreasing order of the lengths of the paths they belong to. More
specifically, we initially order the $p$ paths in decreasing order
of their lengths. Then, the exact label of each vertex is determined
as follow; see Fig.~\ref{sfg:spider-one-even}:

\begin{figure}[t]
    \centering
    \begin{minipage}[b]{.48\textwidth}
        \centering
        \subfloat[\label{sfg:spider-alleven}{All paths of even lengths.}]
        {\includegraphics[width=\textwidth,page=1]{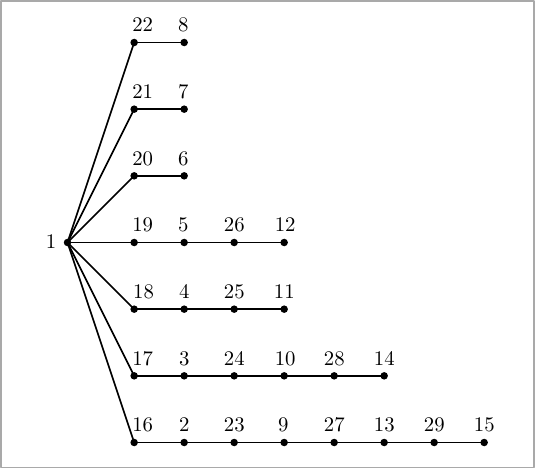}}
    \end{minipage}
    \begin{minipage}[b]{.48\textwidth}
        \centering
        \subfloat[\label{sfg:spider-allodd}{All paths of odd lengths.}]
        {\includegraphics[width=\textwidth,page=2]{even-odd-spiders}}
    \end{minipage}
    \caption{Special cases of spider graphs where our labeling is optimal.}
    \label{fig:even-odd-spider}
\end{figure}

\begin{enumerate}[-]
  \item A vertex $v_{l,q}$ belonging to an odd level $l=2i+1$, $i \ge 1$ is labeled as $N_{e}+ 1 + \sum_{k=0}^{i-1}N_{2k+1} + q$.
  \item A vertex $v_{l,q}$ belonging to an even level $l = 2i$, $i \ge 1$ is labeled as $1 + \sum_{k=1}^{i-1}N_{2k} + q$.
\end{enumerate}

We now show that the above labeling has maximum differential value
$N_{e}$. First, consider the difference between the center and a
level-$1$ vertex $v_{1,q}$. The difference is $N_{e}+1 + q - 1$. As
$q \in [1,p]$, this is at least $N_{e}+1 \ge N_{e}$.

Now, consider the difference between a vertex of level $l = 2i$ and
a vertex of level $l+1 = 2i+1$, for some $i \ge 1$. Since $N_{2k+1}
\ge N_{2k+2}$, the difference is:

\begin{eqnarray*}
v_{l+1,q} - v_{l, q} & = & N_{e}+1 + \sum_{k=0}^{i-1}N_{2k+1} + q - (1 + \sum_{k=1}^{i-1}N_{2k} + q) \quad \\
& = & n-k(1+\Delta) - 1 \quad \\
& = & N_{e} + N_{2(i-1) +1} + \sum_{k=0}^{i-2}(N_{2k+1} - N_{2k+2}) \quad\\
& \geq & N_{e} + N_{2(i-1) +1} \ge N_e \quad\\
\end{eqnarray*}

Now, consider the difference between a vertex $v_{l,q}$ at level $l
= 2i$ and a vertex $v_{l-1,q}$ on the same path. In this case and
since $N_{2k+1} \ge N_{2k+2}$, the difference is:

\begin{eqnarray*}
v_{l-1,q} - v_{l,q} & = & N_{e}+ 1 + \sum_{k=0}^{i-2}N_{2k+1} + q - (1 + \sum_{k=1}^{i-1}N_{2k} + q) \quad \\
& = & N_{e}+ \sum_{k=0}^{i-2}(N_{2k+1} - N_{2k+2}) \geq N_e \quad
\end{eqnarray*}

From the above, it follows that our labeling for the case, where all
$p$ paths have even length, has maximum differential value $N_{e}$,
as desired. Since all the paths have even length $N_{e} = N_{o} =
\lceil\frac{n}{2}\rceil$.

We now consider the case where all of the $p$ paths have odd length;
see Fig.~\ref{sfg:spider-allodd}. Let $k$ be the length of the
longest path. Since all paths have odd length, we have $N_{2i} =
N_{2i+1} \forall i \in [1,\lfloor\frac{k}{2}\rfloor]$. Also, $N_{o}
-N_{e} = N_1 + \sum_{i=1}^{\lfloor\frac{k}{2}\rfloor}(N_{2i+1} -
N_{2i}) = p$, and, $n = N_{o} + N_{e} + 1 = 2 N_{e} + p + 1$. So,
$n$ and $p$ have different parity. On the other hand,
$\lceil\frac{n-p}{2}\rceil = N_{e} +1$. We label the center vertex
as $\lceil\frac{n}{2}\rceil$. We next order the paths as
$P_{\lfloor\frac{p}{2} +1\rfloor},P_1,P_{\lfloor\frac{p}{2}
+2\rfloor},P_2 \dots$ in the decreasing order of their lengths.
Then, the exact label of each vertex is determined as follow; see
Fig.~\ref{sfg:spider-all-odd}:

\begin{enumerate}[-]
  \item A vertex $v_{l,q}$ belonging to an odd level $l=2i-1$ and $q \le \lfloor\frac{p}{2}\rfloor$ is labeled as $\sum_{m=1}^{i-1}\lfloor\frac{N_{2m-1}}{2}\rfloor + q$.
  \item A vertex $v_{l,q'}$ belonging to an odd level $l = 2i - 1$  and $q'>\lfloor\frac{p}{2}\rfloor$, $q' = q + \lfloor\frac{p}{2}\rfloor$ is labeled as  $n - (\sum_{m=1}^{i}\lceil\frac{N_{2m-1}}{2}\rceil) + q$.
  \item A vertex $v_{l,q}$ belonging to an even level $l=2i$ and $q \le \lfloor\frac{p}{2}\rfloor$ is labeled as $\lceil\frac{n}{2}\rceil+\sum_{m=1}^{i-1}\lfloor\frac{N_{2m}}{2}\rfloor + q$.
  \item A vertex $v_{l,q'}$ belonging to an even level $l = 2i$ and $q'> \lfloor\frac{p}{2}\rfloor$, $q' = q + \lfloor\frac{p}{2}\rfloor$ is labeled as $\lceil\frac{n}{2}\rceil -(\sum_{m=1}^{i}\lceil\frac{N_{2m}}{2}\rceil) +q - 1$.
\end{enumerate}

\begin{figure}[t]
    \centering
    \begin{minipage}[b]{\textwidth}
        \centering
        \subfloat[\label{sfg:spider-one-even}{The labeling scheme for spiders with all paths of even length.}]
        {\includegraphics[width=\textwidth,page=3]{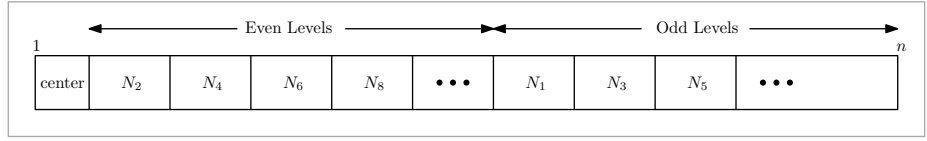}}
    \end{minipage}
    \begin{minipage}[b]{\textwidth}
        \centering
        \subfloat[\label{sfg:spider-all-odd}{The labeling scheme for spiders with all paths of odd length.}]
        {\includegraphics[width=\textwidth,page=4]{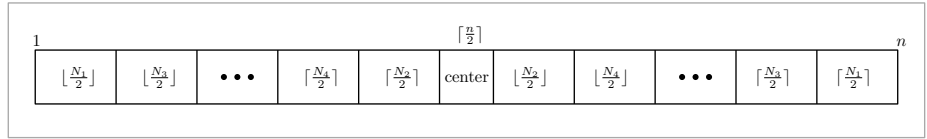}}
    \end{minipage}
    \caption{The two cases of our spider labeling scheme.}
    \label{fig:spider-numline}
\end{figure}

We now show that the above labeling has maximum differential value
$N_{e} + 1$. First, consider the difference between the center and a
level-$1$ vertex $v_{1,q}$. When $q \le \lfloor\frac{p}{2}\rfloor$,
the difference is $ \lceil\frac{n}{2}\rceil -q $, which is at least
$\lceil\frac{n}{2}\rceil - \lfloor\frac{p}{2}\rfloor =
\lceil\frac{n-p}{2}\rceil$ . When $q' >\lfloor\frac{p}{2}\rfloor ,
q' = q + \lfloor\frac{p}{2}\rfloor $, the difference is $n -
\lceil\frac{N_1}{2}\rceil + q - \lceil\frac{n}{2}\rceil$, which is
at least  $n - \lceil\frac{p}{2}\rceil + 1 - \lceil\frac{n}{2}\rceil
= \lfloor\frac{n}{2}\rfloor - \lceil\frac{p}{2}\rceil + 1 =
\lceil\frac{n-p}{2}\rceil$.

Now, consider the difference between the labels of vertices in level
$l = 2i$ and level $l-1 = 2i-1$. Since for every $m$ it holds that
$\lfloor\frac{N_{2m}}{2}\rfloor \ge
\lfloor\frac{N_{2m-1}}{2}\rfloor$, for $q \le
\lfloor\frac{p}{2}\rfloor$ the difference is:

\begin{eqnarray*}
v_{l-1,q} - v_{l,q} & = & \lceil\frac{n}{2}\rceil+ \sum_{m=1}^{i-1}\lfloor\frac{N_{2m}}{2}\rfloor + q - (\sum_{m=1}^{i-1}\lfloor\frac{N_{2m-1}}{2}\rfloor) + q) \quad \\
& = & N_{e}+ \sum_{k=0}^{i-2}(N_{2k+1} - N_{2k+2}) \quad \\
& \ge & \lceil\frac{n}{2}\rceil \quad
\end{eqnarray*}

Analogously, for $q' = q + \lfloor\frac{p}{2}\rfloor $ and $q'
>\lfloor\frac{p}{2}\rfloor$  the difference is:

\begin{eqnarray*}
v_{l-1,q'} - v_{l,q'} & = & n - (\sum_{m=1}^{i}\lceil\frac{N_{2m-1}}{2}\rceil) + q - (\lceil\frac{n}{2}\rceil - (\sum_{m=1}^{i}\lceil\frac{N_{2k}}{2}\rceil) +q - 1) \quad \\
& \ge & \lfloor\frac{n}{2}\rfloor + 1 \quad
\end{eqnarray*}

Since $ \lfloor\frac{n}{2}\rfloor + 1 \ge \lceil\frac{n}{2}\rceil$,
in both cases the difference is at least $\lceil\frac{n}{2}\rceil$.
Now, consider the difference between a vertex $v_{l,q}$ at level $l
= 2i$ and a vertex $v_{l+1,q}$ on the same path. For $q \le
\lfloor\frac{p}{2}\rfloor$ the difference is:

\begin{eqnarray*}
v_{l+1,q} - v_{l,q} & = & \lceil\frac{n}{2}\rceil+\sum_{m=1}^{i-1}\lfloor\frac{N_{2m}}{2}\rfloor + q -(\sum_{m=1}^{i}\lfloor\frac{N_{2m-1}}{2}\rfloor) + q) \quad \\
& \ge & \lceil\frac{n}{2}\rceil - \lfloor\frac{N_{2i-1}}{2}\rfloor \quad\\
& \ge & \lceil\frac{n}{2}\rceil - \lfloor\frac{p}{2}\rfloor = \lceil\frac{n-p}{2}\rceil \quad
\end{eqnarray*}

Analogously, for $q' = q + \lfloor\frac{p}{2}\rfloor $ and $q'
>\lfloor\frac{p}{2}\rfloor$ the difference is:

\begin{eqnarray*}
v_{l-1,q'} - v_{l,q'} & = & ( n - (\sum_{m=1}^{i+1}\lceil\frac{N_{2m-1}}{2}\rceil) + q) - (\lceil\frac{n}{2}\rceil - (\sum_{m=1}^{i}\lceil\frac{N_{2m}}{2}\rceil) +q - 1)\quad \\
& = & \lfloor\frac{n}{2}\rfloor  + 1 - \lceil\frac{N_{2i+1}}{2}\rceil \quad\\
& \ge & \lfloor\frac{n}{2}\rfloor  + 1 - \lceil\frac{p}{2}\rceil  = \lceil\frac{n-p}{2}\rceil
\end{eqnarray*}

We now argue that we achieve an optimal labeling for $G$ if all
paths are of even length. Recall that in this case our labeling
scheme achieves a maximum differential value of $N_e$. As all paths
have even lengths, $N_e = N_o$. Thus, $n = N_e + N_o + 1 = 2N_e +
1$. By Property~\ref{prp:upperbd}, it follows that $\dc(G) \le \lceil n
/2 \rceil = \lceil (2N_e + 1) \rceil = N_e$. In the case where $G$
is a spider with all paths of odd length, our labeling achieves a
maximum differential value of $N_e +1$, which is optimal by
Theorem~\ref{thm:opt-spider}. This completes the proof of
Theorem~\ref{thm:spider}.
\end{proof}

Now, recall that a $k$-radius star $G$ is a spider where all paths
have length exactly $k$. As $k$ is either an even or an odd number,
either all paths of $G$ are of even length or all paths are of odd
length. Thus, our labeling scheme is optimal for $k$-radius star
graphs. Hence, we can state the following as a corollary of
Theorem~\ref{thm:spider}.

\begin{corollary}\label{cor:kstar}
There exists a linear-time algorithm that computes an optimal
labeling for all radius-$k$ star graphs.
\end{corollary}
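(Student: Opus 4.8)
The plan is to observe that a radius-$k$ star is precisely a spider in which every one of its paths has length exactly $k$. Consequently all its paths share the same parity: they are all of odd length when $k$ is odd, and all of even length when $k$ is even. In either case the hypothesis of Theorem~\ref{thm:spider} is satisfied, so that theorem applies verbatim and already supplies an optimal labeling --- of value $N_e+1$ when $k$ is odd and of value $N_e$ when $k$ is even, where $N_e$ denotes the number of even-level vertices of $G$. Hence the corollary follows immediately, and the only thing left to do is to make the conclusion explicit in terms of the natural parameters of the star.

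Concretely, for a radius-$k$ star with $p$ legs I would compute $N_e$: the center contributes a single even-level vertex, and on each of the $p$ paths a fixed fraction (roughly $k/2$, the precise value depending only on the parity of $k$) of the vertices lies at an even level, so a straightforward count gives $N_e$ as an explicit function of $p$ and $k$. Substituting this into Theorem~\ref{thm:spider} then yields the exact value of $\dc(G)$ for every radius-$k$ star, together with the labeling that realizes it.

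There is no genuine obstacle here: all of the real content --- both the construction and its optimality, the latter resting on the upper bound of Theorem~\ref{thm:opt-spider} --- is already carried by Theorem~\ref{thm:spider}. The only points requiring any care are the elementary parity bookkeeping in the count of $N_e$, and the degenerate small cases $p \le 2$, where the ``star'' is merely a path and can be handled separately (or excluded, since a spider is typically taken to have at least three legs).
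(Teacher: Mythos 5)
Your proposal is correct and follows exactly the paper's own argument: a radius-$k$ star is a spider whose paths all have length $k$, hence all of the same parity, so Theorem~\ref{thm:spider} applies directly and yields an optimal labeling. The extra computation of $N_e$ in terms of $p$ and $k$ is harmless additional detail but not needed for the corollary.
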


%-------------------------------------------------------------------------
\section{Labeling General Caterpillars}
\label{sec:gen-cat}
%-------------------------------------------------------------------------

We start with a labeling scheme for the more intuitive --but
slightly restricted-- case, where $G$ is a caterpillar and each
spine vertex has at least one leg. Then, we adapt the proposed
scheme to general caterpillars.

%-------------------------------------------------------------------------
\subsection{Labeling for caterpillars with no missing legs}
\label{sec:catwithlegs}
%-------------------------------------------------------------------------

We describe a labeling algorithm for $G$ that achieves differential
value at least $n/2 -\Delta -1$, comprising of two phases; the {\em
marking phase} and the {\em labeling phase}. The marking phase
places the vertices of $G$ into one of the following sets: $L_s$,
$H_s$, $M$, $L_{\ell}$, $H_{\ell}$, $L_{M_{\ell}}$ and
$H_{M_{\ell}}$; see Fig.~\ref{fig:gencat}. The labeling phase
assigns actual values to vertices of $G$.

\begin{figure}[t]
    \centering
    \includegraphics[width=\textwidth,page=1]{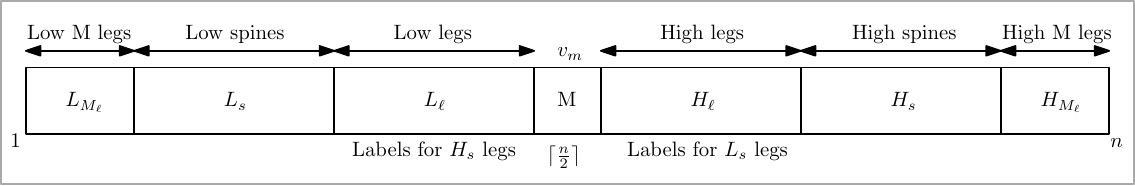}
    \caption{Allocation of labels to the vertices of a caterpillar graph; legs of $L_s$ are labeled with numbers from the interval $H_{\ell}$ and legs of $H_s$ are labeled with numbers values from the interval $L_{\ell}$.}
    \label{fig:gencat}
\end{figure}

\begin{theorem}
Let $G$ be a caterpillar with $n$ vertices in which each spine
vertex has at least one leg. There exists a labeling of $G$ with
differential coloring value at least $\lceil \frac{n}{2} \rceil -
\Delta -1$.
\label{thm:gen-cat-legs}
\end{theorem}
\begin{proof}
As already stated, our labeling algorithm is comprising of two
phases; the marking phase and the labeling phase.

\noindent{\bf Marking:} Vertices in $L_s, H_s, M$ are spine vertices
and those in $L_{\ell}$, $H_{\ell}$, $L_{M_{\ell}} \cup
H_{M_{\ell}}$ are the legs of spine vertices in $H_s$, $L_s$ and
$M$, respectively. More precisely, the vertex in $M$ is labeled as
$\lceil n/2 \rceil$, the vertices in $L_s, L_{\ell}, L_{M_{\ell}} $
are labeled with \emph{low values} $< \lceil n/2 \rceil$, and those
in $H_s, H_{\ell}, H_{M_{\ell}}$ are labeled \emph{high values} $>
\lceil n/2 \rceil$.

Start by placing all odd-numbered spine vertices in set $L_s$ and
all even-numbered ones in $H_s$, assuming that the spine vertices
are numbered according to their position in the spine path. The legs
of a spine vertex $v \in L_s$ are placed into $H_{\ell}$ and the
legs of a spine vertex $v\in H_s$ into $L_{\ell}$. Now, select one
vertex to place into set $M$ by traversing the spine vertices $S =
L_s \cup H_s$ from right to left. At each vertex $v_i \in S$ we
temporarily ignore $v_i$ and its legs from their current sets and
check if the following {\em balance condition} holds:

\begin{eqnarray}
|L_s| + |L_{\ell}| < n/2 & \mbox{ and } &  |H_s| + |H_{\ell}| \le n/2
\label{eq:balance}
\end{eqnarray}

Intuitively, when the balance condition holds the number of vertices
which we will label with low and high values are both less than
$n/2$. If the balance condition holds at $v_i \in S$, we place $v_i$
in set $M$. Otherwise, we {\em flip} $v_i$ and its legs as follows:
if $v_i$ is in set $L_s$, then we move it into set $H_s$ and move
its legs into $L_{\ell}$; else if $v_i$ is in set $H_s$, then we
move it into set $L_s$ and its legs into $H_{\ell}$.

We claim that the process always stops with a configuration where
the balance condition holds. Suppose without loss of generality that
initially $|L_s| + |L_{\ell}| < n/2$ but $|H_s| + |H_{\ell}| > n/2$.
As vertices and its legs are flipped during the traversal, if the
balance condition is not met in the end, then we must have $|L_s| +
|L_{\ell}| > n/2$ and $|H_s| + |H_{\ell}| < n/2$. Hence, at some
point in the traversal, we switch from $|L_s| + |L_{\ell}|  < n/2$
to $|L_s| + |L_{\ell}|  > n/2 $, when we flip some vertex $v_i$ and
its legs. Ignoring this vertex $v_i$ and its legs ensures that
$|L_s| + |L_{\ell}|  < n/2$. Thus, we can place vertex $v_i$ into
$M$ and stop.

Let $v_m$ be the vertex placed into $M$. Now, we partition the at
most $\Delta$ legs of $v_m$ into sets $L_{M_{\ell}}$ and
$H_{M_{\ell}}$, such that the total low and high values are:

\begin{eqnarray}
\mbox{ Low values:} & ~~ & |L_s| + |L_{\ell}| + |L_{M_{\ell}}|  + |M| = \lceil n/2 \rceil  \label{eq:totallow}\\
\mbox{ High values:} & ~~ & |H_s| + |H_{\ell}| + |H_{M_{\ell}}| = \lfloor n/2 \rfloor \label{eq:totalhigh}
\end{eqnarray}

\noindent{\bf Labeling:} Label $v_m$ as $\lceil n/2 \rceil$. Then,
label the legs of $v_m$ in $L_{M_{\ell}}$ with values from the
interval $[1,|L_{M_{\ell}}|]$ and its legs in $H_{M_{\ell}}$ with
values from the interval $[n-|H_{M_{\ell}}|+1, n]$.  As $v_m$ has at
most $\Delta$ legs, the minimum difference value between $v_m$ and
its legs is:

\begin{equation}\label{eq:diff_m_legs}
\min\{\lceil n/2 \rceil-|L_{M_{\ell}}|, \lceil n/2 \rceil - |H_{M_{\ell}}| \} \ge n/2 - \Delta
\end{equation}

For the spine vertices $S = L_s \cup H_s$, label the vertices of
$L_s$ with numbers from the interval $I(L_s) = [|L_{M_{\ell}}|+1,
|L_{M_{\ell}}|+|L_s|]$ and the vertices of $H_s$ with numbers the
interval $I(H_s) = [n-|H_{M_{\ell}}|-|H_s|+1, n-|H_{M_{\ell}}|]$.
Start by labeling the spine neighbors of $v_m$. By the balancing
procedure, $v_m$ has at most one spine neighbor from $L_s$ and one
spine neighbor from $H_s$. The neighbor from $L_s$ (if it exists) is
labeled as $|L_{M_{\ell}}| + 1$ and the neighbor from $H_s$ (if it
exists) is labeled as $n-|H_{M_{\ell}}|$.  The difference between
$v_m$ and its spine neighbors is:

\begin{equation}\label{eq:diff_m_sp}
\min\{\lceil n/2 \rceil-|L_{M_{\ell}}| , n- |H_{M_{\ell}}| - \lceil n/2 \rceil\} \ge n/2 - \Delta -1
\end{equation}

The remaining spine vertices of $L_s$ are labeled with remaining
numbers of $I(L_s)$ in increasing order. Start with the first vertex
in $L_s$ which is left of $v_m$, then move leftward labeling
vertices from $L_s$ until reaching the leftmost vertex in $L_s$.
Now, proceed to the rightmost vertex in $L_s$ and move leftward
again until $v_m$ is reached. Label the spine vertices of $H_s$ with
the remaining numbers from the interval $I(H_s)$ in exactly the same
way, i.e., in an increasing fashion starting from the vertex in
$H_s$ to the left of $v_m$ and moving leftward. As we always
increment the spine vertices by one, the difference between a spine
vertex and its adjacent spine vertices is either:

\begin{eqnarray}
 |L_s| + |L_{\ell}| +|M| + |H_{\ell}| \mbox{ or } \nonumber    \\
 |L_s| + |L_{\ell}| +|M| + |H_{\ell}| -1 \label{eq:adjspine}
\end{eqnarray}

In both cases, the difference is at least $|L_s| + |L_{\ell}|
+|M|-1$, which by Equation~\ref{eq:totallow} is at least $n/2 -
|L_{M_{\ell}}| -1 \ge n/2 -\Delta -1$.

We now describe how the leg vertices are labeled. The labels of
$L_{\ell}$ come from interval $I(L_{\ell}) = [|L_{M_{\ell}}|+ |L_s|
+1, \lceil n/2 \rceil -1]$. The labels of $H_{\ell}$ come from the
interval $I(H_{\ell}) = [\lceil n/2 \rceil +1, \lceil n/2 \rceil+
|H_{\ell}|]$. The values of $I(H_{\ell})$ are assigned in increasing
order starting with the legs of the spine vertex of $L_s$ with the
lowest value. Thus, we first label the legs of the spine vertex
labeled $|L_{M_{\ell}}| + 1$, then the legs of $|L_{M_{\ell}}| +2$,
and so on until $|L_{M_{\ell}}| + |L_s|$. Assign the values of
$I(L_{\ell})$ in decreasing order starting with the spine vertex of
$H_s$ with the highest label. Thus, the legs of $n-|H_{M_{\ell}}|$
are labeled first, then the legs of $n-|H_{M_{\ell}}|-1$ and so on
until $n-|H_{M_{\ell}}|-|H_s|+1$.

As all spine vertices have at least one leg, the difference between
the $j$-th lowest spine vertex of $L_s$ and one of its legs is at
least the value given by Equation~\ref{eq:diff-spine-leg} (same for
$H_s$):

\begin{eqnarray}
|L_s| -j + |L_{\ell}| + |M| + j & \ge &  \lceil n/2 \rceil -|L_{M_{\ell}}| -1 ~ \mbox{ by Equation~\ref{eq:totallow}} \nonumber \\
|H_s| -j + |H_{\ell}| + |M| + j & \ge &  \lceil n/2 \rceil -|H_{M_{\ell}}| -1 ~ \mbox{ by Equation~\ref{eq:totalhigh}} \label{eq:diff-spine-leg}
\end{eqnarray}

As $|L_{M_{\ell}}|$ and $|H_{M_{\ell}}|$ are both $\le \Delta$, the
differences are at least $n/2 - \Delta -1$.
\end{proof}

%-------------------------------------------------------------------------
\subsection{Extending to general caterpillars}
%-------------------------------------------------------------------------

We extend the labeling scheme to general caterpillars to achieve
differential value at least $n/2 -\Delta -2$.  The main idea is to
consider spine vertices with no legs as {\em pseudo-legs} of their
neighbors, thus transforming a general caterpillar into a
caterpillar where all but one spine vertex have at least one leg or
pseudo-leg; see Fig.~\ref{fig:gencatlabels}. Observe that the
rightmost spine vertex has at least one leg as otherwise it is a leg
of the spine vertex to its left.

\begin{theorem}
Let $G$ be a caterpillar with $n$ vertices. There exists a labeling
of $G$ with differential coloring value at least $\lceil \frac{n}{2}
\rceil - \Delta -2$.
\label{thm:gen-cat}
\end{theorem}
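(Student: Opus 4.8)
\noindent\emph{Proof plan.}
The plan is to build the labeling explicitly. Put $t=\lceil n/2\rceil-\Delta-2$ (if $t\le 0$ there is nothing to prove), write the spine as $s_1,\dots,s_k$ with $\ell_i\le\Delta$ legs at $s_i$, and split the labels into a lower block $B_{\mathrm{lo}}=\{1,\dots,\lceil n/2\rceil\}$ and an upper block $B_{\mathrm{hi}}=\{\lceil n/2\rceil+1,\dots,n\}$. First I would place the spine. Since the spine is a path, it has an optimal labeling of value $\dc(P_k)=\lfloor k/2\rfloor$ in which the odd-indexed vertices sit in the lower half and the even-indexed ones in the upper half of $\{1,\dots,k\}$ in a synchronised zig-zag; I would stretch this pattern over the full range, giving the odd-indexed spine vertices increasing labels near the bottom of $B_{\mathrm{lo}}$ and the even-indexed ones decreasing labels near the top of $B_{\mathrm{hi}}$ (and the mirror image), so that every spine edge $s_is_{i+1}$ receives a label difference of at least $|B_{\mathrm{lo}}|$ minus a small parity error. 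Then I would place the legs: each leg of a spine vertex in $B_{\mathrm{lo}}$ gets a still-free label of $B_{\mathrm{hi}}$ at distance $\ge t$ from it, and symmetrically for $B_{\mathrm{hi}}$. The only spine vertex causing real trouble is the one whose label lies nearest the split $\lceil n/2\rceil$: its $\le\Delta$ legs must be pushed past the split, which is precisely why one can only promise $\lceil n/2\rceil-\Delta-2$ and not $\lceil n/2\rceil$; the extra $-2$ absorbs the ceiling and the parity of $k$.

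Two things then have to be verified. (i) For every spine edge the stretched zig-zag really gives a difference $\ge t$: this is a direct computation from $\dc(P_k)=\lfloor k/2\rfloor$ together with the number of leg-labels that are inserted between the two endpoints. (ii) The leftover $n-k$ labels can be partitioned among the spine vertices so that $s_i$ receives $\ell_i$ labels, all at distance $\ge t$ from the label of $s_i$. I expect (ii) to be the main obstacle. In the benign case — legs spread over both parity classes of the spine — a greedy assignment suffices, but in the extreme case where almost all legs hang from spine vertices of a single parity, a plain zig-zag collects all those vertices in one block and leaves no room for their legs in the other; the fix is to insert a bounded number of ``phase reversals'' into the zig-zag so that the heavily-legged parity class is distributed over both blocks, and then to check Hall's condition for the bipartite assignment of leftover labels to spine vertices. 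Since each reversal uses only $O(1)$ of the available slack, $\Delta+2$ remains enough.

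Finally, one can shorten the case analysis by observing that if the (unique) bipartition $(U,V)$ of the caterpillar satisfies $\min\{|U|,|V|\}\ge t$, then the Miller--Pritikin scheme~\cite{miller89} already yields value $\min\{|U|,|V|\}\ge t$; so the construction above only needs to be carried out in the complementary unbalanced regime, which is exactly the regime in which the bound outperforms Miller--Pritikin (cf.\ Section~\ref{sec:compare}).
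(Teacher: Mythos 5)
Your high-level architecture matches the paper's: split the labels at $\lceil n/2\rceil$, put alternating spine vertices into contiguous intervals at the two extremes, and give each spine vertex's legs labels from the opposite block. But the two places you yourself flag as ``the main obstacle'' are exactly where the paper has to do real work, and your proposed fixes do not close the gap. First, the claim that the unbalanced case can be handled by ``a bounded number of phase reversals,'' each costing only $O(1)$ slack, is unsubstantiated and problematic. A phase reversal puts two adjacent spine vertices into the same block; for that spine edge to retain difference $\ge \lceil n/2\rceil-\Delta-2$, one of the two must be labeled within roughly $\Delta+2$ of the split, and then its $\le\Delta$ legs must all be crammed into the $O(\Delta)$ most extreme labels at the far end of the range. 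The paper shows this is feasible for exactly \emph{one} such pivot vertex $v_m$ (labeled precisely $\lceil n/2\rceil$, with its legs split between $[1,|L_{M_\ell}|]$ and $[n-|H_{M_\ell}|+1,n]$), and — crucially — proves a balance lemma: by flipping a suffix of the spine during a right-to-left traversal, one always reaches a configuration where \emph{a single} pivot suffices to make both sides have fewer than $n/2$ vertices. Without that lemma, ``bounded number of reversals'' is not enough: a second reversal would need a second near-boundary spine vertex whose legs also need the already-consumed extreme labels. Your appeal to Hall's condition likewise glosses over the fact that the leg labels must be \emph{ordered} consistently with the spine labels (the paper assigns the legs of the lowest-labeled $L_s$ vertex the lowest labels of $H_\ell$, and so on) — a mere existence statement for a system of distinct representatives does not yield the spine-to-leg distance bound.

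Second, you never address spine vertices with no legs, and this is where the paper's $-2$ actually comes from (not ``the ceiling and the parity of $k$,'' as you suggest). The paper first proves the bound $n/2-\Delta-1$ under the assumption that every spine vertex has at least one leg, and then converts a legless spine vertex into a \emph{pseudo-leg} of its right neighbor; this reassignment can leave one spine vertex short one leg and degrades the spine-to-leg estimate by exactly one, giving $n/2-\Delta-2$. Your observation that the Miller--Pritikin bound $\min\{|U|,|V|\}$ already covers the balanced regime is correct but does not help: the unbalanced regime is precisely where the whole difficulty lies. As written, the proposal is a plausible plan with the same skeleton as the paper's proof, but the balancing lemma, the explicit interval layout with its distance computations, and the legless-vertex treatment — the actual content of the proof — are missing.
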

\begin{proof}
Again, our labeling algorithm is comprising of two phases; the
marking phase and the labeling phase.

\noindent {\bf Marking}:  Select a vertex $v_m$ for set $M$ as
before. Then, traverse the spine from left to right to determine
which vertices are pseudo-legs.  Let $v$ be the current spine vertex
and $v'$ be the spine vertex to the right of $v$. If $v \ne v_m$ and
$v$ currently has no legs or pseudo-legs, then first assign $v$ to
be a pseudo-leg of $v'$ and move $v$ into the corresponding set as
follows: if $v' \in L_s$, then  move $v$ into set $H_{\ell}$, if $v'
\in H_s$, then move $v$ into set $L_{\ell}$, and if $v' \in M$, then
keep $v$ in its current set.  Observe that in the first case vertex
$v$ moves from set $H_s$ into set $H_{\ell}$, and in the second it
moves from set $L_s$ into set $L_{\ell}$. Thus, the number of low
and high values to be assigned both remain $ < n/2$ and the balance
condition in Equation~\ref{eq:balance} is still satisfied.

Now, let $v_m'$ be the right neighbor of $v_m$ on the spine. If
$v_m'$ is currently a pseudo-leg, then reassign $v_m'$ to be a
pseudo-leg of $v_m$, and if $v_m' \in L_{\ell}$, then move $v_m'$
into $L_s$ and if $v_m' \in H_{\ell}$, then move it into $H_s$. Note
that the balance condition in Equation~\ref{eq:balance} is
maintained. However, as we reassigned $v_m'$ to be a pseudo-leg of
$v_m$, this may leave one vertex, namely the spine vertex to the
right of $v_m'$ with no legs. Finally, partition the {\em real} legs
of $v_m$ into sets $L_{M_{\ell}}$ and $H_{M_{\ell}}$ as before.

\begin{figure}[t]
    \centering
    \includegraphics[width=\textwidth,page=2]{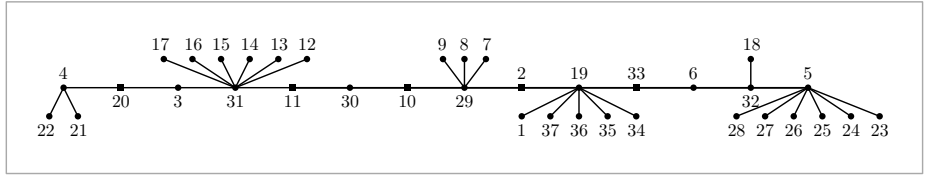}
    \caption{Labeling of a general caterpillar, where the spine vertices drawn as squares correspond to pseudo-legs.}
    \label{fig:gencatlabels}
\end{figure}

\noindent{\bf Labeling:} Label $v_m$ as $\lceil n/2 \rceil$. Then,
label $v_m$'s {\em real} legs and its neighboring vertices on the
spine also as before. Note that no the two neighboring spine
vertices of $v_m$ may be pseudo-legs of $v_m$. Still, one of them is
in the set $L_s$ and the other is in the set $H_s$, and we label
these as $|L_{M_{\ell}}| +1$ and $n-|H_{M_{\ell}}|$, respectively.
Equations~\ref{eq:diff_m_legs} and \ref{eq:diff_m_sp} still apply,
so the differential value for $v_m$ is at least $n/2 - \Delta -1$.

We label the remaining spine vertices from $S = L_s \cup H_s$ and
their legs and pseudo-legs $L_{\ell} \cup H_{\ell}$ as before. We
show that the difference between adjacent vertices is at least $n/2
- \Delta -2$. First consider two adjacent vertices from $S$. Their
difference is still given by Equation~\ref{eq:adjspine} and is at
least $|L_s| + |L_{\ell}| +|M|-1$. By Equation~\ref{eq:totallow}
this is at least $n/2 - |L_{M_{\ell}}| -1$, and as $|L_{M_{\ell}}|
\le \Delta$, the difference is $\ge n/2 -\Delta -1$.

Next, consider a vertex $v\in S$ and its legs (including its
pseudo-leg). We modify Equation~\ref{eq:diff-spine-leg} to take into
account that at most one spine vertex may have no legs. The
difference between the $j$-th lowest spine vertex of $L_s$ and one
of its legs and the $j$-th lowest spine vertex of $H_s$ and one of
its legs is at least:

\begin{eqnarray*}
|L_s| -j + |L_{\ell}| + |M| + (j-1) & \ge &  \lceil n/2 \rceil -|L_{M_{\ell}}| -1  ~ \mbox{ by Equation~\ref{eq:totallow}} \\
|H_s| -j + |H_{\ell}| + |M| + (j-1) & \ge &  \lceil n/2 \rceil -|H_{M_{\ell}}| -1 ~ \mbox{ by Equation~\ref{eq:totalhigh}}
\end{eqnarray*}

As $|L_{M_{\ell}}|$ and $|H_{M_{\ell}}|$ are both $\le \Delta$, the
difference is at least $n/2 - \Delta -1$.

Finally, consider a vertex $v\in S$ which is adjacent to $v_p$,
where $v_p$ is a pseudo-leg of $v'\in S$, $v'\ne v$. Each vertex $v
\in S$ may be adjacent to at most one such $v_p$. As $v'$ may have
$v_p$ as its only leg and as there is at most one spine vertex with
no legs, the difference depending on the label of $v$ is at least:

\begin{eqnarray*}
|L_s| -j + |L_{\ell}| + |M| + (j-2) & \ge &  \lceil n/2 \rceil -|L_{M_{\ell}}| -2  ~ \mbox{ by Equation~\ref{eq:totallow}} \\
|H_s| -j + |H_{\ell}| + |M| + (j-2) & \ge &  \lceil n/2 \rceil -|H_{M_{\ell}}| -2 ~ \mbox{ by Equation~\ref{eq:totalhigh}}
\end{eqnarray*}

As $|L_{M_{\ell}}|$ and  $|H_{M_{\ell}}|$ are $\le \Delta$, the
difference is $\ge n/2 - \Delta -2$, which completes the proof of
Theorem~\ref{thm:gen-cat}.
\end{proof}
%-------------------------------------------------------------------------
\subsection{Comparison with the Miller-Pritikin scheme}
\label{sec:compare}
%-------------------------------------------------------------------------

Consider a non-regular caterpillar with $s = 2k+1$ spine vertices,
where the $k+1$ odd spine vertices have one leg each and the $k$
even spine vertices have $\Delta$ legs each. This forms a bipartite
graph with disjoint vertex sets $U$ and $V$, where the $k+1$ odd
spine vertices and the $k\Delta$ legs of even spine vertices form
the set $U$, and the rest of the vertices form set $V$. The
Miller-Pritikin labeling achieves differential value equal to the
size of the smaller vertex set, i.e., $\min\{ |U|, |V|\} = 2k + 1$.
On the same graph, our labeling scheme achieves differential value
at least $n/2 -\Delta -1 = \frac{2k+1 + (k+1) + (\Delta k)}{2}
-\Delta -1 \ge \frac{k}{2}(\Delta -1)$. Let $\Delta = \Omega(n)$ and
$k=O(1)$. Then, the Miller-Pritikin scheme achieves differential
value $O(1)$, while our labeling scheme achieves differential value
$O(n)$, making it potentially worse than ours by a factor of
$\Omega(n)$.

\begin{figure}[t]
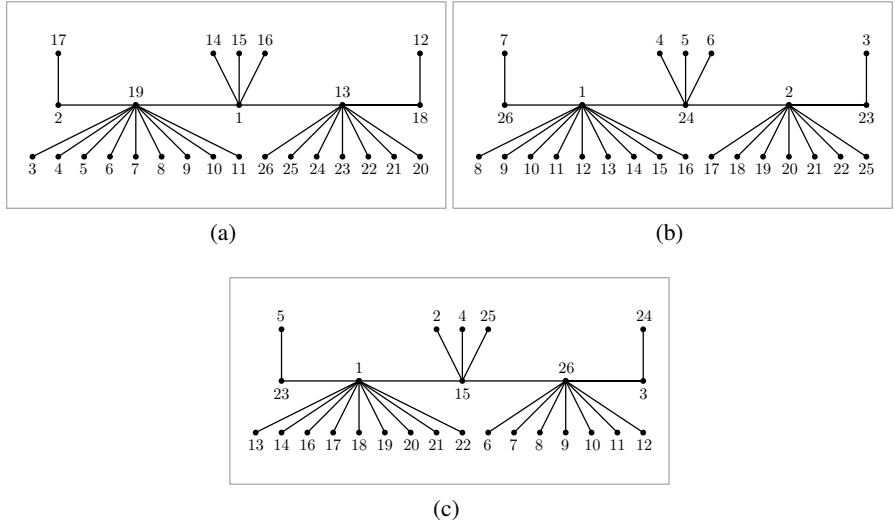

    \centering
    \begin{minipage}[b]{.48\textwidth}
        \centering
        \subfloat[\label{fig:badex}{}]
        {\includegraphics[width=\textwidth,page=3]{general-caterpillars}}
    \end{minipage}
    \begin{minipage}[b]{.48\textwidth}
         \centering
        \subfloat[\label{fig:badex-miller}{}]
        {\includegraphics[width=\textwidth,page=4]{general-caterpillars}}
    \end{minipage}
    \begin{minipage}[b]{.48\textwidth}
         \centering
        \subfloat[\label{fig:badex-goodlabel}{}]
        {\includegraphics[width=\textwidth,page=5]{general-caterpillars}}
    \end{minipage}
    \caption{Three different labeling of a caterpillar graph.
    (a)~The labeling given by Theorem~\ref{thm:gen-cat} achieves differential value $18-13=5$.
    (b)~The Miller-Pritikin labeling achieves differential value $8-1=7$.
    (c)~A manually generated labeling achieving differential value $25-15 = 10$.}
    \label{fig:badex}
\end{figure}

Note that there are graphs for which the Miller-Pritikin scheme
achieves differential value  better than the one that our labeling
scheme achieves. Fig.~\ref{fig:badex} gives an example of a
caterpillar labeled in three different ways; by our labeling scheme
(of Theorem~\ref{thm:gen-cat}), by the Miller-Pritikin labeling
scheme and by a manually generated one. Observe that our labeling
achieves the lowest differential value; the Miller-Pritikin labeling
is only slightly better, while the manually generated labeling is
twice as good as our labeling.

%-------------------------------------------------------------------------
\section{Differential Coloring of biconnected triangle-free outer-planar graphs}
\label{sec:outerplanar}
%-------------------------------------------------------------------------

We first show how to obtain a $3$-equitable coloring (coloring with
$3$ colors, in which the number of vertices of each color may differ
by at most one) of a biconnected triangle-free outer-planar graph
$G$. Then, we use this coloring to obtain a differential coloring of
$G$ with value $\ge \frac{n}{3} -1$. The existence of $3$-equitable
colorability of biconnected triangle-free outer-planar graphs is
known~\cite{Wu2008985}, but not all $3$-equitable colorings can be
converted to differential colorings with the desired bound. Unlike
the existential proof in~\cite{Wu2008985}, our proof is constructive
and our algorithm also guarantees that the computed coloring can be
appropriately converted to a differential coloring of $G$ with value
$\ge \frac{n}{3} -1$.

\begin{lemma}
Given a biconnected triangle-free outer-planar graph $G$ on $n$
vertices, there is an $O(n)$-time algorithm that computes a
$3$-equitable coloring of
$G$.\label{lem:equitable-outerplanar-coloring}
\end{lemma}
\begin{proof}
Our algorithm is recursive. Consider an arbitrary edge $(u, v)$ of
$G$ that does not belong to its external face and let $f$ and $g$ be
the faces to its left and the right side, respectively, as we move
along $(u, v)$ from vertex $u$ to vertex $v$. Then, $f$ and $g$
correspond to two vertices, $v_f$ and $v_g$, of the weak dual of $G$
and $(v_f, v_g)$ is an edge in the weak dual; see
Fig.~\ref{fig:outerplanar-coloring}. The weak dual is the subgraph of the dual graph
whose vertices correspond only to the bounded faces of the primal graph.
Since the weak dual of a
biconnected outer-planar graph is a tree (not a forest), the removal
of edge $(v_f, v_g)$ results in two trees $T_f$ and $T_g$ rooted at
vertices $v_f$ and $v_g$ of the dual of $G$, respectively. For the
recursive step of our algorithm, we assume that we have already
computed a $3$-equitable coloring for the subgraph, say $G_f$, of
$G$ induced by $T_f$.

\begin{figure}[t]
    \centering
    \begin{minipage}[b]{.48\textwidth}
        \centering
        \subfloat[\label{fig:outerplanar-coloring}{}]
        {\includegraphics[width=\textwidth,page=1]{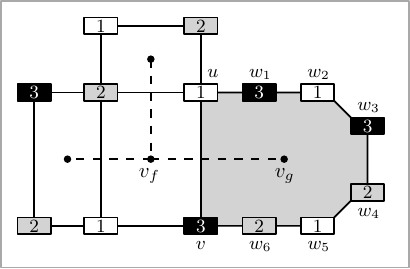}}
    \end{minipage}
    \begin{minipage}[b]{.48\textwidth}
        \centering
        \subfloat[\label{fig:outerplanar-labeling}{}]
        {\includegraphics[width=\textwidth,page=2]{outerplanar}}
    \end{minipage}
    \caption{
    (a)~A $3$-equitable coloring of a biconnected triangle-free outer-planar graph.
    (b)~The corresponding differential labeling of value $14-10=4$.}
    \label{fig:outer}
\end{figure}

We recursively compute a $3$-equitable coloring for the subgraph,
say $G_g$, of $G$ induced by $T_g$, so that the overall coloring is
a $3$-equitable coloring for $G$. Assume that $g = u, w_1, w_2,
\ldots, w_p,v$. Since $(u,v)$ is an edge of $G_f$, which is
$3$-equitable colored, vertices $u$ and $v$ have different colors;
without loss of generality let $u$ and $v$ be colored with $1$ and
$2$ from the color set $C=\{1,2,3\}$. We describe how to color the
vertices $w_1,w_2, \ldots, w_p$ along the face $g$ and maintain
$3$-equitable coloring. We consider two cases depending on the type
of equitable coloring of $G_f$.

\begin{description}
\item[Case 1:] \emph{The number of vertices of each color is the same}.
Then, if $p=1$ we can extend the $3$-equitable coloring by assigning
color $3$ to $w_1$. If $p=2$ we can easily extend the $3$-equitable
coloring by assigning colors $2$ and $1$ to $w_1$ and $w_2$. If
$p=3$ we can also extend the $3$-equitable coloring by assigning
colors $2$, $3$ and $1$ to $w_1$, $w_2$, and $w_3$.  For $p>3$, we
have the same three sub-cases modulo 3, taking into account the
colors of the first and last vertex. We note that the case where
$p=1$ as described above cannot occur as $G$ is triangle-free;
however, in faces consisting of more than three vertices, $p
\mod{3}$ can be equal to $1$.

\item[Case 2:] \emph{The number of vertices of each color is almost the same (at least one set is off by one)}.
We have six sub-cases depending on the type of color imbalance in
$G_f$ (e.g., one more vertex of color $1$, one less vertex of color
$2$, etc.). We deal with each of the sub-cases by coloring either
only $w_1$, or only $w_p$, or both $w_1$ and $w_p$, such that we
extend the $3$-equitable coloring and now the number of vertices of
each color is the same. Note that here we need the assumption that
graph is triangle-free, hence $p\ge2$. If there are uncolored
vertices of $g$, they can be colored using the case 1 coloring
strategy.
\end{description}

The rest of graph $G_g$ is processed similarly, by traversing the
dual free $T_g$ one face at a time, starting from $v_g$. This
completes the description of the recursive $3$-equitable coloring
algorithm. The algorithm begins with some face of $G$ that
corresponds to a leaf in the weak dual, for which it is easy to
compute a $3$-equitable coloring.
\end{proof}

\begin{theorem}\label{thm:biconn}
A biconnected triangle-free outer-planar graph $G$ on $n$ vertices
admits a differential coloring of value $\ge \frac{n}{3} -1$.
\end{theorem}
\begin{proof}
The proof of Lemma~\ref{lem:equitable-outerplanar-coloring} not only
implies a $3$-equitable coloring of $G$, but also suggests an order
in which the vertices of $G$ are colored. In particular, it is easy
to derive this order, if we  keep track of the time when each vertex
is colored. Let $V_i$ be the set of vertices of $G$ with color $i$,
$i=1,2,3$, and, assume without loss of generality that
$|V_1|\ge|V_2|\ge|V_3|$. We have a labeling space of $n$ slots,
which we divide into three consecutive parts, say $C_1$, $C_2$ and
$C_3$, each of which of length $|V_1|$, $|V_2|$ and $|V_3|$,
respectively. We fill up each part from left to right. Specifically,
we process the vertices of $G$ in the order in which they are
colored. Assume that we have processed zero or more vertices and let
$v$ be the next vertex in this order. If $v \in V_i$, then $v$
occupies the leftmost empty slot of $C_i$, $i=1,2,3$. Since $G$ has
a partial $3$-equitable coloring during the coloring process, the
differential coloring value will be greater than or equal to
$min(|V_1|, |V_2|,|V_3|)$, i.e., $\frac{n}{3} - 1$. %Note that the
%bound that we presented is not tight, since a cycle on $n \geq 4$
%vertices (which is trivially biconnected and triangle-free) admits
%differential labeling of value $\geq n/2-1$.
\end{proof}

\begin{corollary}\label{cor:biconbipa}
A biconnected bipartite outer-planar graph $G$ on $n$ vertices
admits an optimum differential coloring of value equal to $\frac{n}{2} -1$.
\end{corollary}
\begin{proof}
A bipartite graph $G$ does not contain odd length cycles. Since $G$
is outer-planar, this implies that the outerface should also
consists of an even number of vertices. Hence, $n$ is even. We
compute a coloring of $G$, using the recursive algorithm described
in the proof of Lemma~\ref{lem:equitable-outerplanar-coloring}.
Since each face of $G$ has an even number of vertices, each face
that is being colored contains an even number of vertices that are
uncolored (this trivially covers the base of the recursive
algorithm), which implies that just two colors suffice to equitably
color all of its uncolored vertices (as its two already colored
vertices are unavoidably of different colors). Hence, $G$ is
$2$-equitably colorable and using an argument similar to one of
Lemma~\ref{thm:biconn}, we can prove that its differential chromatic
number is $\frac{n}{2} -1$.
\end{proof}

%-------------------------------------------------------------------------
\section{Conclusion and Future Work}
\label{sec:conclusion}
%-------------------------------------------------------------------------

In this paper, we proved that the differential coloring problem is
\NPH for planar graphs and we presented tight upper bounds for
regular caterpillars and spiders and closed-form optimal labeling
schemes for regular caterpillars and spiders with path lengths all
even or all odd. We notice that in a recent manuscript, Rahaman {\em
et al.}~\cite{itchy} independently obtain a result similar to our
Theorem~\ref{thm:opt-reg-cat}, i.e., an optimal labeling scheme for
regular caterpillars. For general caterpillars and biconnected
triangle-free outer-planar graphs, we presented labeling algorithms
which produce close-to-optimal labeling. Of course, there are
several natural open problems raised by our work.

\begin{enumerate}[-]
\item For general caterpillars, it is not known whether the maximum
differential coloring problem can be solved in polynomial time or
whether it is an \NPH problem. Neither ours nor the Miller-Pritikin
labeling scheme is optimal. Our algorithm is guaranteed to be within
an additive value of $\Delta  + 2$ from the optimal labeling (as
well as from the Miller-Pritikin labeling) and there are instances
where the Miller-Pritikin labeling is worse than ours by a factor
$\Theta(n)$.

\item
The decision version of the differential coloring problem is,
given a graph $G=(V,E)$ and a positive integer $k \leq |V|$, 
 determine whether $G$ has differential chromatic
number $k$. For general graphs the problem remains \NPC even for $k = 2$.
It is still
open whether the
problem is \NPH for planar graphs for a fixed constant $k$.

\item We proved that the maximum differential coloring is NP-complete
even for planar graphs. It is worth mentioning that the
computational complexity of the problem is not known for general
trees.

\item For outer-planar graphs, the known results are even fewer. We
only coped with the case of biconnected triangle-free outer-planar
graphs, which is of course a special case of the general maximum
differential coloring problem on outer-planar graphs. It still
remains open if the problem is \NPH for outer-planar graphs. Good
approximations or heuristics are also of interest.

\item There exist several other natural open problems including finding
optimal labeling schemes, proofs of NP-hardness or good
approximations for various other classes of graphs, such as lobsters
(trees in which removing all leaves results in a caterpillar),
interval graphs, cubic graphs, regular bipartite graphs, %optimal
$1-$planar graphs.
\end{enumerate}

%
% Old version:
%
%But it still remains open if the problem is \NPH for trees or even
%for general caterpillars. The computational complexity of the
%problem is also not known for outer-planar graphs. If the problems
%mentioned above are \NPH, then good approximations or heuristics are
%of interest. %In a recent manuscript, Rahaman {\em et
%%al.}~\cite{itchy} independently obtain a result similar to our
%%Theorem~\ref{thm:opt-reg-cat}, i.e., an optimal labeling scheme for
%%regular caterpillars.  Still, for general caterpillars neither ours
%%nor the Miller-Pritikin labeling scheme is optimal. Our caterpillar
%%algorithm is guaranteed to be within an additive value of $\Delta  +
%%2$ from the optimal labeling (as well as from the Miller-Pritikin
%%labeling) and there are instances where the Miller-Pritikin labeling
%%is worse than ours by a factor $\Theta(n)$. Finally, we have an
%%intuitive closed-form optimal labeling scheme for regular
%%caterpillars and for spiders with path lengths all even or all odd;
%There exist several other natural open problems including finding
%optimal labeling schemes, proofs of NP-hardness or good
%approximations for various other classes of graphs, such as lobsters
%(trees in which removing all leaves results in a caterpillar),
%interval graphs, cubic graphs, regular bipartite graphs, %optimal
%$1-$planar graphs.

\section*{Acknowledgments}
We thank Jawaherul Alam, Aparna Das, Markus Geyer, Steven Chaplick and Sergey
Pupyrev for many discussions about many different variants of the
differential coloring problem.

\bibliographystyle{abbrv}
\bibliography{refs}

\begin{thebibliography}{10}

\bibitem{assmann1981bandwidth}
S.~Assmann, G.~Peck, M.~$Sys{\l}o$, and J.~Zak.
\newblock The bandwidth of caterpillars with hairs of length 1 and 2.
\newblock {\em SIAM Journal on Algebraic Discrete Methods}, 2(4):387--393,
  1981.

\bibitem{memetic}
R.~Bansal and K.~Srivastava.
\newblock Memetic algorithm for the antibandwidth maximization problem.
\newblock {\em Journal of Heuristics}, 17(1):39--60, 2011.

\bibitem{abktree}
T.~Calamoneri, A.~Massini, L.~T{\"o}r{\"o}k, and I.~Vrt'o.
\newblock Antibandwidth of complete k-ary trees.
\newblock {\em Electronic Notes in Discrete Mathematics}, 24:259--266, 2006.

\bibitem{abhamming}
S.~Dobrev, R.~Kr{\'a}lovic, D.~Pardubsk{\'a}, L.~T{\"o}r{\"o}k, and I.~Vrt'o.
\newblock Antibandwidth and cyclic antibandwidth of hamming graphs.
\newblock {\em Discrete Applied Mathematics}, 161(10-11):1402--1408, 2013.

\bibitem{grasp}
A.~Duarte, R.~Mart{\'\i}, M.~Resende, and R.~Silva.
\newblock Grasp with path relinking heuristics for the antibandwidth problem.
\newblock {\em Networks}, 58(3):171--189, 2011.

\bibitem{Gansner_Hu_Kobourov_2009_gmap}
E.~R. Gansner, Y.~F. Hu, and S.~G. Kobourov.
\newblock {GMap}: Visualizing graphs and clusters as maps.
\newblock In {\em IEEE Pacific Visualization Symposium (PacificVis 2010)},
  pages 201--208, 2010.

\bibitem{Garey:1979:CIG:578533}
M.~R. Garey and D.~S. Johnson.
\newblock {\em Computers and Intractability: A Guide to the Theory of
  NP-Completeness}.
\newblock W. H. Freeman \& Co., New York, NY, USA, 1979.

\bibitem{golovach2009bandwidth}
P.~Golovach, P.~Heggernes, D.~Kratsch, D.~Lokshtanov, D.~Meister, and
  S.~Saurabh.
\newblock Bandwidth on {AT}-free graphs.
\newblock {\em Algorithms and Computation}, 412(50):573--582, 2009.

\bibitem{hajnal70}
A.~Hajnal and E.~Szemeredi.
\newblock Proof of a conjecture of {P}.{E}rd{\"o}s.
\newblock {\em Combinatorial Theory and its Application}, pages 601--623, 1970.

\bibitem{hale80}
W.~Hale.
\newblock Frequency assignment: Theory and applications.
\newblock {\em Proceedings of the IEEE}, 68(12):1497 -- 1514, 1980.

\bibitem{Heggernes07bandwidthof}
P.~Heggernes, D.~Kratsch, and D.~Meister.
\newblock Bandwidth of bipartite permutation graphs in polynomial time.
\newblock In E.~Laber, C.~Bornstein, L.~Nogueira, and L.~Faria, editors, {\em
  LATIN 2008: Theoretical Informatics}, volume 4957 of {\em Lecture Notes in
  Computer Science}, pages 216--227. Springer Berlin Heidelberg, 2008.

\bibitem{hu2010}
Y.~Hu, S.~Kobourov, and S.~Veeramoni.
\newblock On maximum differential graph coloring.
\newblock In U.~Brandes and S.~Cornelsen, editors, {\em 18th International
  Symposium on Graph drawing (GD2010)}, volume 6502 of {\em Lecture Notes in
  Computer Science}, pages 274--286. Springer-Verlag, 2011.

\bibitem{Isaak98powersof}
G.~Isaak.
\newblock Powers of hamiltonian paths in interval graphs.
\newblock {\em Journal of Graph Theory}, 28(1):31--38, 1998.

\bibitem{kloks1998bandwidth}
T.~Kloks, D.~Kratsch, and H.~M{\"u}ller.
\newblock Bandwidth of chain graphs.
\newblock {\em Information Processing Letters}, 68(6):313--315, 1998.

\bibitem{kubale2004graph}
M.~Kubale.
\newblock {\em Graph Colorings}, volume 352 of {\em Contemporary mathematics}.
\newblock American Mathematical Society, 2004.

\bibitem{leung1984}
J.~Y.-T. Leung, O.~Vornberger, and J.~D. Witthoff.
\newblock On some variants of the bandwidth minimization problem.
\newblock {\em SIAM Journal on Computing}, 13(3):650--667, 1984.

\bibitem{linantibandwidth}
W.-Y. Lin and A.-C. Chu.
\newblock Antibandwidth of bipartite graphs.
\newblock {\em 29th Workshop on Combinatorial Mathematics and Computation
  Theory}, pages 191--195, 2012.

\bibitem{NYAS:NYAS344}
Y.~Lin.
\newblock A level structure approach on the bandwidth problem for special
  graphs.
\newblock {\em Annals of the New York Academy of Sciences}, 576(1):344--357,
  1989.

\bibitem{miller89}
Z.~Miller and D.~Pritikin.
\newblock On the separation number of a graph.
\newblock {\em Networks}, 19(6):651--666, 1989.

\bibitem{Monien86}
B.~Monien.
\newblock The bandwidth minimization problem for caterpillars with hair length
  3 is {NP}-complete.
\newblock {\em SIAM Journal on Algebraic and Discrete Methods}, 7(4):505--512,
  1986.

\bibitem{Papadimitriou_1975}
C.~Papadimitriou.
\newblock The {NP-C}ompleteness of the bandwidth minimization problem.
\newblock {\em Computing}, 16(3):263--270, 1975.

\bibitem{itchy}
M.~S. Rahaman, T.~Ahmed, S.~A. Abdullah, and M.~S. Rahman.
\newblock Antibandwidth problem for itchy caterpillar.
\newblock Manuscript, 2013.

\bibitem{abrsstv}
A.~Raspaud, H.~Schr{\"o}der, O.~S{\'y}kora, L.~T{\"o}r{\"o}k, and I.~Vrt'o.
\newblock Antibandwidth and cyclic antibandwidth of meshes and hypercubes.
\newblock {\em Discrete Mathematics}, 309(11):3541--3552, 2009.

\bibitem{reed2003recent}
B.~Reed and C.~Linhares-Sales.
\newblock {\em Recent Advances in Algorithmic Combinatorics}.
\newblock Applied Mathematical Sciences. Springer, 2003.

\bibitem{3dmesh}
L.~T{\"o}r{\"o}k and I.~Vrt'o.
\newblock Antibandwidth of three-dimensional meshes.
\newblock {\em Discrete Mathematics}, 310(3):505--510, 2010.

\bibitem{Wang20091947}
X.~Wang, X.~Wu, and S.~Dumitrescu.
\newblock On explicit formulas for bandwidth and antibandwidth of hypercubes.
\newblock {\em Discrete Applied Mathematics}, 157(8):1947 -- 1952, 2009.

\bibitem{weili}
Y.~Weili, L.~Xiaoxu, and Z.~Ju.
\newblock Dual bandwidth of some special trees.
\newblock {\em Journal of Zhengzhou University (Natural Science)},
  35(3):16--19, 2003.

\bibitem{Wu2008985}
J.~Wu and P.~Wang.
\newblock Equitable coloring planar graphs with large girth.
\newblock {\em Discrete Mathematics ,Selected Papers from 20th British
  Combinatorial Conference}, 308(5–6):985 -- 990, 2008.

\bibitem{yan1997bandwidth}
J.-H. Yan.
\newblock The bandwidth problem in cographs.
\newblock {\em Tamsui Oxford Journal of Mathematical Science}, 13:31--36, 1997.

\bibitem{yixun2003dual}
L.~Yixun and Y.~Jinjiang.
\newblock The dual bandwidth problem for graphs.
\newblock {\em Journal of Zhengzhou University (Natural Science)}, 35(1):1--5,
  2003.

\end{thebibliography}

\end{document}